\def\@ACM@checkaffil{
	\if@ACM@instpresent\else
	\ClassWarningNoLine{\@classname}{No institution present for an affiliation}%
	\fi
	\if@ACM@citypresent\else
	\ClassWarningNoLine{\@classname}{No city present for an affiliation}%
	\fi
	\if@ACM@countrypresent\else
	\ClassWarningNoLine{\@classname}{No country present for an affiliation}%
	\fi
}
\setlist[enumerate]{leftmargin=*}
\setlist[itemize]{leftmargin=*}
\newcommand{\ouralg}{{\texttt{WormHole}}}
\newcommand{\bibfs}{{\texttt{BiBFS}}}
\newcommand{\PLL}{\texttt{PLL}}
\newcommand{\MLL}{\texttt{MLL}}
\newcommand{\SP}{{\texttt{SP}}}
\newcommand{\cmark}{\ding{51}}
\newcommand{\xmark}{\ding{55}}%
\newtheorem{theorem}{Theorem}[section]
\newtheorem{claim}[theorem]{Claim}
\newenvironment{proofsketch}{%
  \proof}{\endproof}
\newcommand{\ignore}[1]{}
\newcommand{\EX}{\hbox{E}}
\newcommand{\eqdef}{:=}
\newcommand{\davg}{d}
\newcommand{\Sec}[1]{\hyperref[sec:#1]{\S\ref*{sec:#1}}} 
\newcommand{\Eqn}[1]{\hyperref[eq:#1]{(\ref*{eq:#1})}} 
\newcommand{\Fig}[1]{\hyperref[fig:#1]{Figure\,\ref*{fig:#1}}} 
\newcommand{\Tab}[1]{\hyperref[tab:#1]{Table\,\ref*{tab:#1}}} 
\newcommand{\Thm}[1]{\hyperref[thm:#1]{Theorem\,\ref*{thm:#1}}} 
\newcommand{\Fact}[1]{\hyperref[fact:#1]{Fact\,\ref*{fact:#1}}} 
\newcommand{\Lem}[1]{\hyperref[lem:#1]{Lemma\,\ref*{lem:#1}}} 
\newcommand{\Prop}[1]{\hyperref[prop:#1]{Proposition~\ref*{prop:#1}}} 
\newcommand{\Cor}[1]{\hyperref[cor:#1]{Corollary~\ref*{cor:#1}}} 
\newcommand{\Conj}[1]{\hyperref[conj:#1]{Conjecture~\ref*{conj:#1}}} 
\newcommand{\Def}[1]{\hyperref[def:#1]{Definition~\ref*{def:#1}}} 
\newcommand{\Alg}[1]{\hyperref[alg:#1]{Algorithm~\ref*{alg:#1}}} 
\newcommand{\Ex}[1]{\hyperref[ex:#1]{Example~\ref*{ex:#1}}} 
\newcommand{\Clm}[1]{\hyperref[clm:#1]{Claim~\ref*{clm:#1}}} 
\newcommand{\Step}[1]{\hyperref[step:#1]{Step~\ref*{step:#1}}} 
\newcommand{\lz}{\mathcal{C}_{{\textsf{in}}}} 
\newcommand{\lo}{\mathcal{C}_{{\textsf{out}}}} 
\newcommand{\lt}{\mathcal{P}} 
\newcommand{\clc}{\mathcal{C_{\textsf{CL}}}} 
\newcommand{\coregen}{\texttt{CoreGen}} 
 \patchcmd\Gread@eps{\@inputcheck#1}{\@inputcheck"#1"\relax}{}{}
\definecolor{RoyalPurple}{HTML}{7851a9}
\definecolor{darkred}{HTML}{8B0000}
\definecolor{darkgreen}{HTML}{228B22}
\definecolor{teal}{HTML}{008080}
\definecolor{indigo}{HTML}{4B0082}
\definecolor{lightgray}{HTML}{D7DDDC}
\definecolor{niceblue}{rgb}{0.17, 0.3, 0.6}
 \newcommand{\drawGenerateLzero}[3]{

\node[client, fill=#1] (A) at (0,0) {};
\node[client, fill=#2] (B) at (1,0) {};
\node[client, fill=#2] (C) at (-0.4,0.8) {};
\node[client, fill=#2] (D) at (-0.7,-0.6) {};
\node[client, fill=#2] (P) at (1.4,0.8) {};
\node[client, fill=#3] (E) at (1.7, 1.9) {};
\node[client, fill=#3] (F) at (0.3, 1.6) {};
\node[client, fill=#3] (G) at (-1.8,-1.9) {};
\node[client, fill=#3] (H) at (-1.2,0.9) {};
\node[client, fill=#3] (I) at (1.7,-1.7) {};
\node[client, fill=#3] (J) at (-1.8,1.7) {};
\node[client, fill=#3] (K) at (-0.9,1.9) {};
\node[client, fill=#3] (L) at (-1.9,0.4) {};
\node[client, fill=#3] (M) at (-0.5,-1.9) {};
\node[client, fill=#3] (N) at (0,-1.2) {};
\node[client, fill=#3] (O) at (0.4,-1.6) {};
\node[client, fill=#3] (Q) at (1.2,-0.9) {};
\node[client, fill=#3] (R) at (-1.6,-0.5) {};

\tikzstyle{edgeStyle}=[#2, very thick]
\draw[edgeStyle] (A) -- (B);
\draw[edgeStyle] (A) -- (C);
\draw[edgeStyle] (A) -- (D);
\draw[edgeStyle] (A) -- (P);

\tikzstyle{edgeStyle}=[#3, very thick]
\draw[edgeStyle] (B) -- (P);
\draw[edgeStyle] (B) -- (Q);
\draw[edgeStyle] (C) -- (D);
\draw[edgeStyle] (C) -- (H);
\draw[edgeStyle] (C) -- (F);
\draw[edgeStyle] (C) -- (K);
\draw[edgeStyle] (C) -- (R);
\draw[edgeStyle] (D) -- (R);
\draw[edgeStyle] (D) -- (N);
\draw[edgeStyle] (E) -- (F);
\draw[edgeStyle] (E) -- (P);
\draw[edgeStyle] (G) -- (M);
\draw[edgeStyle] (H) -- (J);
\draw[edgeStyle] (H) -- (K);
\draw[edgeStyle] (H) -- (L);
\draw[edgeStyle] (H) -- (R);
\draw[edgeStyle] (I) -- (Q);
\draw[edgeStyle] (M) -- (N);
\draw[edgeStyle] (N) -- (O);
\draw[edgeStyle] (O) -- (Q);

}
 \newcommand{\drawGenerateLzeroTwo}[3]{

\tikzstyle{client}=[draw=none,circle,minimum size=2.5ex, inner sep=0, fill,]
\def\Figonelen{0.14\textwidth}

\begin{figure*}

      \begin{tikzpicture}
    \begin{scope}[xscale=0.58, yscale=.5, transform shape]
    \drawGenerateLzero{indigo}{indigo!45}{indigo!16};
    \node[scale=1.4, below left] at (B) {1};
    \node[scale=1.4, above right] at (C) {1};
    \node[scale=1.4, below left] at (D) {1};
    \node[scale=1.4, above left] at (P) {1};
    \end{scope}
    
    \begin{scope}[xshift=4cm , xscale=0.58, yscale=.5, transform shape]
    \drawGenerateLzero{indigo}{indigo!45}{indigo!16};
    \node[client, fill=indigo] at (D) {};
    \node[client, fill=indigo!45] at (R) {};
    \node[client, fill=indigo!45] at (C) {};
    \node[client, fill=indigo!45] at (N) {};
    \tikzstyle{edgeStyle}=[indigo, very thick]
    \draw[edgeStyle] (A) -- (D);
    \tikzstyle{edgeStyle}=[indigo!45, very thick]
    \draw[edgeStyle] (D) -- (N);
    \draw[edgeStyle] (D) -- (R);
    \draw[edgeStyle] (D) -- (C);
    \node[scale=1.4, below left] at (B) {1};
    \node[scale=1.4, above right] at (C) {2};
    \node[scale=1.4, above left] at (P) {1};
    \node[scale=1.4, above right] at (N) {1};
    \node[scale=1.4, below left] at (R) {1};
    \end{scope}

    
    \begin{scope}[xshift=8cm, xscale=0.58, yscale=.5, transform shape]
    \drawGenerateLzero{indigo}{indigo!45}{indigo!16};
    \node[client, fill=indigo] at (D) {};
    \node[client, fill=indigo] at (C) {};
    \node[client, fill=indigo!45] at (R) {};
    \node[client, fill=indigo!45] at (N) {};
    \node[client, fill=indigo!45] at (H) {};
    \node[client, fill=indigo!45] at (F) {};
    \node[client, fill=indigo!45] at (K) {};
    \tikzstyle{edgeStyle}=[indigo, very thick]
    \draw[edgeStyle] (A) -- (D);
    \draw[edgeStyle] (C) -- (D);
    \draw[edgeStyle] (A) -- (C);
    \tikzstyle{edgeStyle}=[indigo!45, very thick]
    \draw[edgeStyle] (D) -- (N);
    \draw[edgeStyle] (D) -- (R);
    \draw[edgeStyle] (C) -- (H);
    \draw[edgeStyle] (C) -- (F);
    \draw[edgeStyle] (C) -- (K);
    \draw[edgeStyle] (C) -- (R);
    \node[scale=1.4, below left] at (R) {2};
    \node[scale=1.4, above left] at (K) {1};
    \node[scale=1.4, above right] at (F) {1};
    \node[scale=1.4, below left] at (H) {1};
    \node[scale=1.4, above left] at (P) {1};
    \node[scale=1.4, below left] at (B) {1};
    \node[scale=1.4, above right] at (N) {1};

    \end{scope}
    \end{tikzpicture}

  
  \caption{
  Construction of the  inner and outer rings of the core; figure taken from ~\cite{BEOF22}. At any point, the outer ring is the set of vertices adjacent to the inner ring. 
  The algorithm expands the inner ring by adding to it a vertex from the outer ring that has the most neighbors in the inner ring. The image looks at two successive steps: the numbers labelling vertices in the outer ring refer to how many inner ring vertices it is adjacent to. Thus, in the second step, the vertex labelled 2 is added to the inner ring. }
\label{fig:gen-L0}
\end{figure*}
}
\begin{document}
	\title[]{A Sublinear Algorithm 
		for Approximate \\Shortest Paths in Large Networks}
	
	\author{Sabyasachi Basu$^*$}
	\affiliation{%
		\institution{UC Santa Cruz}}
	\email{sbasu3@ucsc.edu}
	
	\author{Nadia Kōshima$^*$}
	\affiliation{\institution{MIT}}
	\email{nadianw36@gmail.com}
	
	\author{Talya Eden}
	\affiliation{%
		\institution{Bar-Ilan University}
	}
	\email{talyaa01@gmail.com}
	
	\author{Omri Ben-Eliezer}
	\affiliation{\institution{MIT}
	}
	\email{omrib@mit.edu}
	
	\author{C. Seshadhri}
	\affiliation{%
		\institution{UC Santa Cruz}
	}
	\email{sesh@ucsc.edu}
	
	\begin{abstract}
		
		Computing distances and finding shortest paths in massive real-world networks is a fundamental algorithmic task in network analysis. There are two main approaches to solving this task. On one hand are traversal-based algorithms like bidirectional breadth-first search (BiBFS), which have no preprocessing step but are slow on individual distance inquiries. On the other hand are indexing-based approaches, which create and maintain a large index. This allows for answering individual inquiries very fast; however, index creation is prohibitively expensive even for moderately large networks. For a graph with 30 million edges, the index created by the state-of-the-art is about 40 gigabytes. We seek to bridge these two extremes: quickly answer distance inquiries without the need for costly preprocessing.
		
		In this work, we propose a new algorithm and data structure, \ouralg{}, for approximate shortest path computations. \ouralg{} leverages structural properties of social networks to build a sublinearly sized index, drawing upon the explicit core-periphery decomposition of Ben-Eliezer et al. [WSDM'22]. Empirically, the preprocessing time of \ouralg{} improves upon index-based solutions by orders of magnitude: for a graph with over a billion edges, indexing takes only a few minutes. Furthermore, individual inquiries are consistently much faster than in BiBFS. The acceleration comes at the cost of a minor accuracy trade-off. Nonetheless, our empirical evidence demonstrates that \ouralg{} accurately answers essentially all inquiries within a maximum additive error of 2 (and an average additive error usually much less than 1). We complement these empirical results with provable theoretical guarantees, showing that \ouralg{}, utilizing a sublinear index, requires $n^{o(1)}$ node queries per distance inquiry in random power-law networks. In contrast, any approach without a preprocessing step (including BiBFS) requires $n^{\Omega(1)}$ queries for the same task.
		
		\ouralg{} offers several additional advantages over existing methods: (i) it does not require reading the whole graph and can thus be used in settings where access to the graph is rate-limited; (ii) unlike the vast majority of index-based algorithms, it returns paths, not just distances; and (iii) for faster inquiry times, it can be combined effectively with other index-based solutions, by running them only on the sublinear core.

	\end{abstract}
	
	\maketitle
	\pagestyle{plain}
	
	\section{Introduction}\label{sec:intro}
	\begin{figure*}[htb!]
		\centering
		\includegraphics[width=.8\textwidth, valign=t]{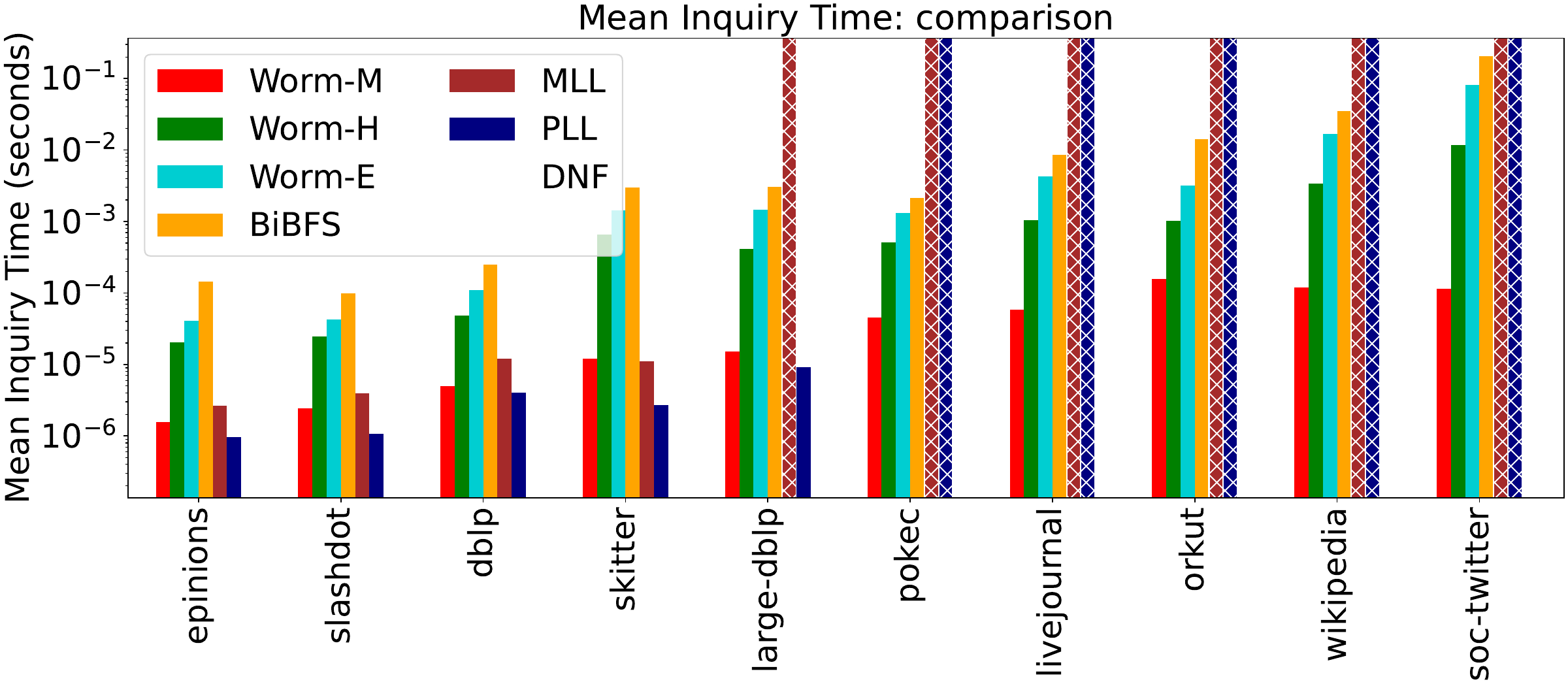}
		
	\caption{We illustrate the average running time per shortest path inquiry for three variants of \ouralg{}, as compared to index-based (MLL \cite{MLL} and PLL \cite{PLL}), and traversal-based (\bibfs{}) competitors. \PLL{} only finds distances, not paths. DNF marks that the preprocessing (index construction) step did not finish. All three of our variants outperformed \bibfs{} consistently. Index based solutions, on the other hand, generally failed on medium to large graphs as the index construction phase timed out. 
		We note that even in smaller graphs where the index construction of MLL and PLL completed successfully, 
		our fastest variant $\ouralg{}_M$ has comparable per-inquiry running time.	}
	\label{fig:exp-summary}
\end{figure*}
Scalable computation of distances and shortest paths in a large network is one of the most fundamental algorithmic challenges in graph mining and graph learning tasks, with applications across science and engineering. Examples of such applications include the identification of important genes or species in biological and ecological networks~\cite{ecology}, driving directions in road networks~\cite{ADGW11, ADFGW11, ADFGW16}, redistribution of task processing from mobile devices to cloud ~\cite{cloud}, 
computer network design and security~\cite{fitch1993shortest,xu2004fighting,4770174, JKSSBKK23},
and identifying a set of users with the maximum influence in a social network ~\cite{KimuraSaito2006,wang2012scalable},  among many others. 
Thus, a long line of work ~\cite{Dijk59, FT87, GH04, PLL, MLL} has developed over the years, constructing   scalable algorithms for distance computation for a variety of real-life tasks.

The simplest methods for answering a shortest path inquiry $(s,t)$ use traversals, among which the most basic is a breadth first search (BFS) starting from $s$ until we reach $t$.  However, the inquiry time for BFS is linear in the network size, which is much too slow for real-world networks.\footnote{To avoid confusion, throughout this paper we use the term \emph{inquiry} to indicate a request (arriving as an input in real-time to our data structure) to compute a short path $\SP(s,t)$ between $s$ and $t$. The term \emph{query} refers to the act, taken by the algorithm itself, of retrieving information about a specific node. For more details on the query model we consider, see~\Sec{setting}.)} 
A  popular modification, Bidirectional BFS ( \bibfs{}), runs BFS from both $s$ and $t$, alternating between the two, until both ends meet. 
It has well been observed in the literature  that \bibfs{} performs surprisingly well for shortest path inquiries on a wide range of networks (see, e.g.,~\cite{MLL,@BM19,bibfs-expander} and the many references within). Because \bibfs{} does not require any prior knowledge on the network structure, it is suitable when the number of shortest path inquiries being made is relatively small. 
However, pure traversal-based approaches  do not scale well when one is required to answer a large number of shortest pair inquiries. As we show in Figure~\ref{fig:exp-summary},  \bibfs{} ends up seeing the whole graph within just a few hundred inquiries.

A long line of modern approaches tackles the distance computation problem in a fundamentally different manner, by preprocessing the network and creating an \emph{index}. The index, in turn, supports extremely fast real-time computation of distances. This line of work has been been investigated extensively in recent years, with Pruned Landmark Labeling (\PLL, Akiba et al.~\cite{PLL}) being perhaps the most influential approach. 

In virtually all index-based methods, pre-processing involves choosing a subset $\mathcal{L}$ of nodes, called \emph{landmarks};  computing all shortest paths among them; and keeping an index of the distance of every node in the network to every landmark. Thus, the space requirement for the index is at least of order $n \cdot |\mathcal{L}|$, where $n$ is the total number of nodes. Naively, this memory requirement can be as bad as \emph{quadratic} in $n$. Despite several improvements to beat the quadratic footprint, existing hub and landmark-based approaches methods continue to have high cost, and can become infeasible even for moderately-sized graphs~\cite{LLMK17}.

Notably, most index-based approaches only return distances in the graph, and not the paths themselves.
The first concrete systematic investigation of solutions outputting shortest paths was made by Zhang et al.~\cite{MLL}. Their work points out that while existing index-based solutions can be adapted to also output shortest paths, these adaptations incur a very high additional space cost on top of that required for distance computations.
The authors of \cite{MLL} then proposed a new approach called monotonic landmark labelling (\MLL) for saving on the index construction space cost. While their algorithm is the current state of the art for this problem, it
is again index-based, meaning that the preprocessing cost is still rather expensive. Improving the computational complexity of the construction phase remains a fundamental challenge.

Beyond the computational constraints, it is sometimes simply unrealistic to assume access to the whole network; examples of scenarios where access is only given via limited query access include, e.g., social network analysis through external APIs \cite{BEOF22}, page downloads in web graphs \cite{CDKLS16}, modern lightweight monitoring solutions in enterprise security \cite{Security2019,Security2022}, and state space exploration in software testing, reinforcement learning and robotics \cite{DART2005}, among many others. 
Existing indexing-based approaches are unsuitable for these scenarios since they require reading the whole graph as a prerequisite. Traversal-based methods such as \bibfs{} are suitable, but as mentioned they do not scale well if one requires multiple distance computations.

The limitations of indexing-based and traversal-based methods give rise to a natural question of whether there is a middle ground solution, with preprocessing that is more efficient than in index-based approaches and inquiry time that is faster than in traversal-based approaches. 
Namely, we ask:
\begin{center}
	\emph{
		Is it possible to answer shortest-path inquiries in\\large networks very quickly, without constructing\\an expensive index, or even seeing the whole graph?}
\end{center}

A general solution for any arbitrary  graph is perhaps impossible; however, real-world social and information networks admit order and structure that can be exploited. In this work we address this question positively for a slightly relaxed version of the shortest path problem on such networks. Inspired by the core-periphery structure of large networks ~\cite{SEIDMAN1983269, RPFM14, Zhang_2015, MGPV20}, we provide a solution which constructs a \emph{sublinearly-sized index} and answers inquiries by querying a strictly sublinear subset of vertices. In particular, our solution \emph{does not need to access the whole network}.  The algorithm returns \emph{approximate} shortest paths, where the approximation error is additive and very small (almost always zero or one). 
In practice, the setup time is negligible (a few minutes for billion-edges graphs), and inquiry times improve on those of \bibfs{}.
Moreover, it can be easily combined with existing index-based solutions, 
to further improve on the inquiry times.
We also include theoretical results that shed light on the empirical performance. 
\begin{figure}[h!]
	\begin{subfigure}[Stored index size on disk for \ouralg{} compared to indexing-based methods.]
		{\includegraphics[width = 0.4\textwidth]{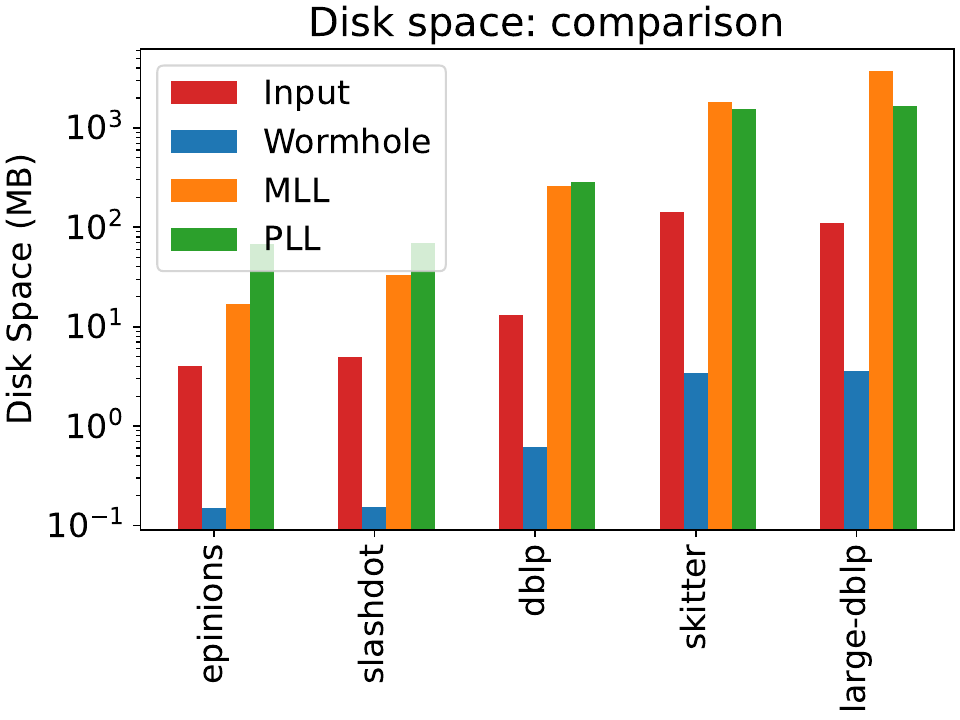}}
	\end{subfigure}
	\begin{subfigure}[Fraction of vertices seen by \ouralg{} in large and huge graphs; see \Tab{classes} for the classes. The dotted lines represent \bibfs{} and solid lines are \ouralg{}.]
		{\includegraphics[width=0.4\textwidth, valign=t]{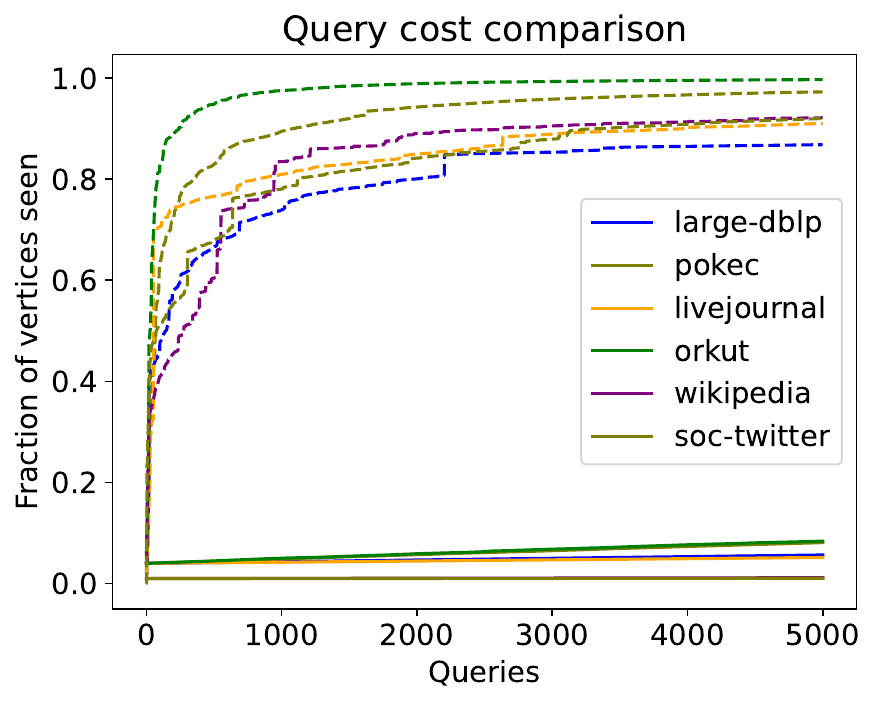}}
	\end{subfigure}
	\caption{(a) a comparison of the footprint in terms of disk space for different methods. The indexing based methods did not terminate on graphs larger than these. For \ouralg{}, we consider the sum of $\lz$ and $\lo$ binary files. Note that \PLL{} here is the \emph{distance} algorithm, solving a weaker problem. The red bar ``Input" is the size of the edge list. 
		(b) we look at the number of vertices queried (visited) by \bibfs{} (dotted lines) and \ouralg{} (solid lines) (the number is the same for all three variants). Observe that while \bibfs{} ends up seeing between 70\% and 100\% of the vertices in just a few hundred inquiries, we are well below 20\% even after 5000 inquiries. 
		\vspace{-0.6cm}}
	\label{fig:summary2}
\end{figure}
\subsection{Our Contribution} \label{sec:contri}
\begin{figure*}[htb!]
	\centering
	
	\begin{subfigure}
		{\includegraphics[width=0.26\textwidth, valign=t]{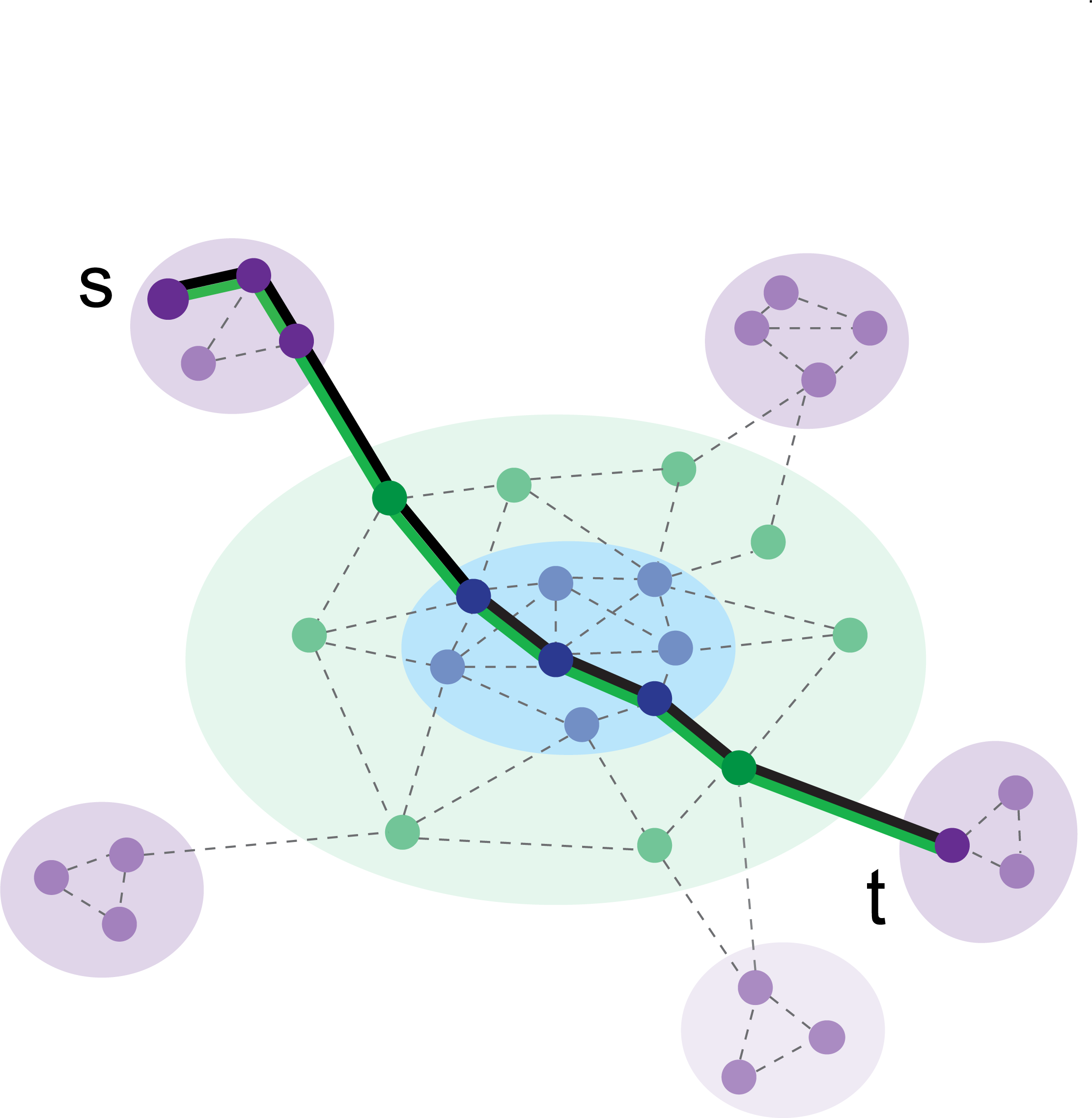}}
	\end{subfigure}
	\hspace{0.05\textwidth}
	\begin{subfigure}
		{\includegraphics[width=0.26\textwidth, valign=t]{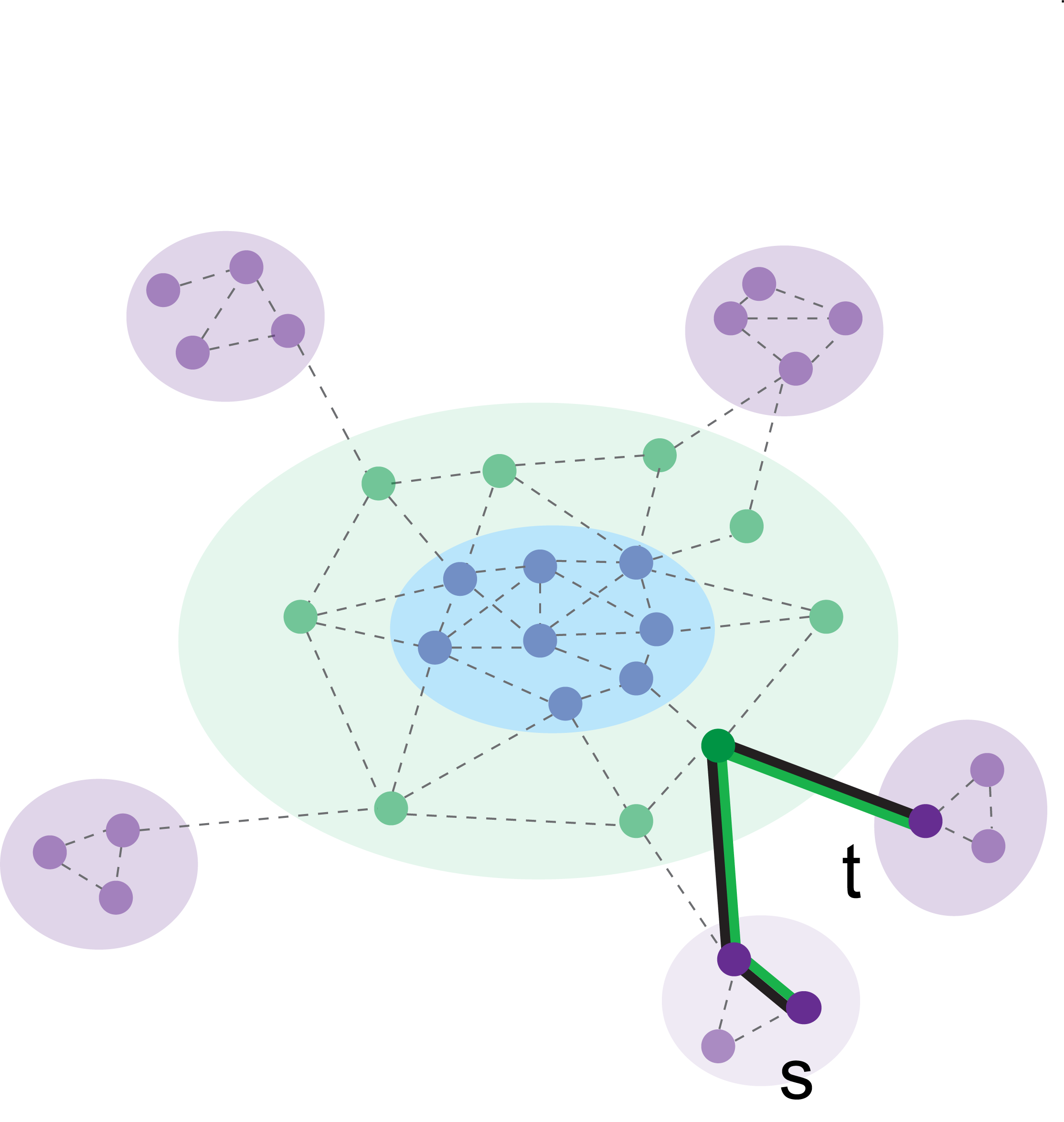}}
	\end{subfigure}
	\hspace{0.05\textwidth}
	\begin{subfigure}
		{\includegraphics[width=0.26\textwidth, valign=t]{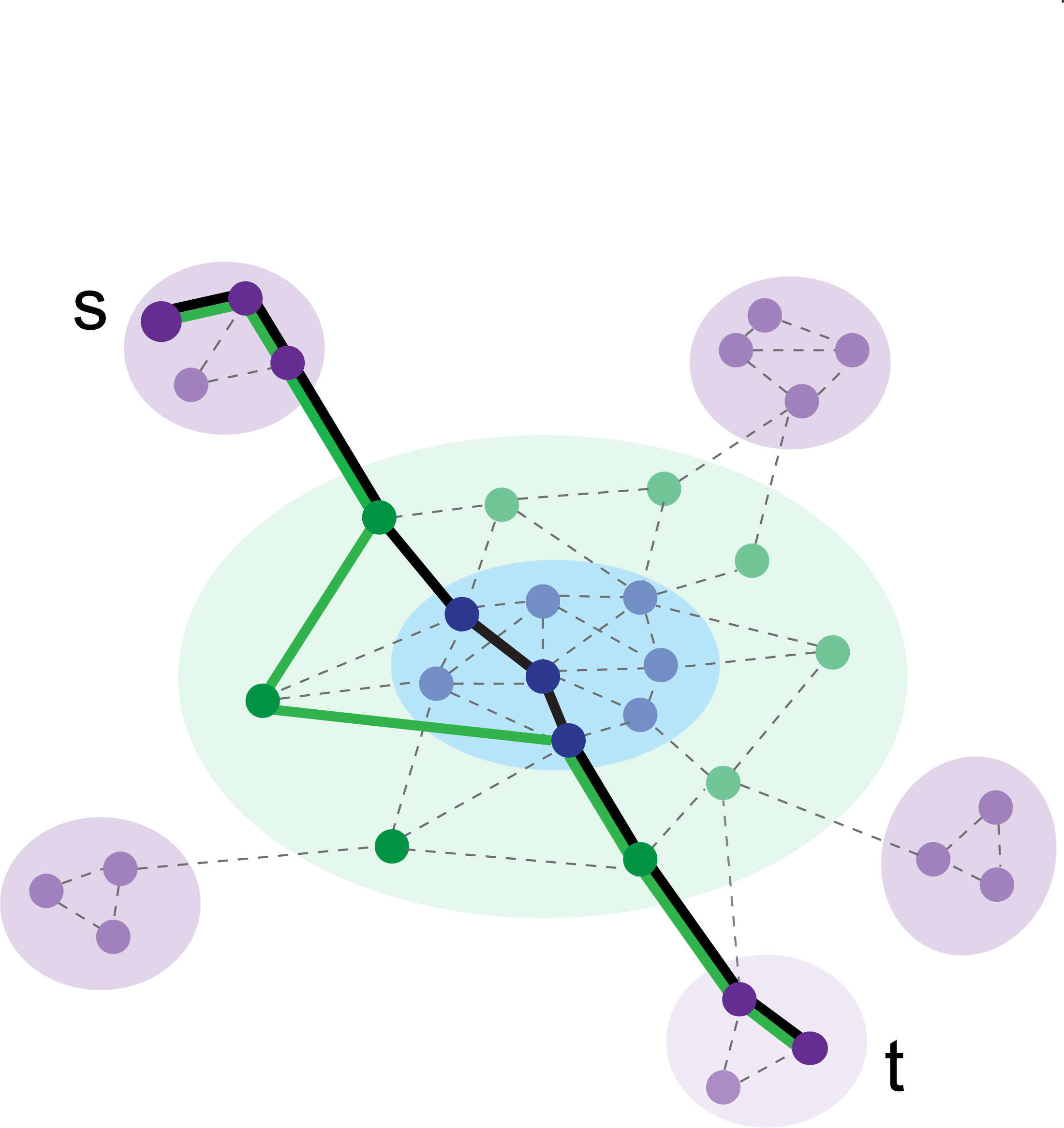}}
	\end{subfigure}
	\caption{The decomposition and some representative cases where \ouralg{} succeeds or fails. The central blue region is the inner ring $\lz$, the green layer outside is the outer ring $\lo$, and purple regions attached to this are the peripheral components forming $\lt$. Dashed lines are edges. Vertices labelled $s$ and $t$ are respectively the source and destination. The green lines are actual shortest paths, while the black lines are paths output by \ouralg. We ignore the case where both source vertices are in the same peripheral component. The first one (a) is the case where the shortest path and the path output by \ouralg{} are identical; no error is incurred in this case. The second (b) is the case where the source and destination are in two different peripheral components, but they encounter a common vertex while traversing to the inner ring. The third (c) is an example of a case where we incur an error: the shortest path(s) interleaves through the outer ring $\lo$, so by restricting the  traversal solely to the interior of $\lz$, we incur an error, in this case of $1$.}
	\label{fig:Algtypes}
\end{figure*}

We design a new algorithm, \ouralg{}, that creates a data structure allowing us to answer multiple shortest path inquiries by exploiting the typical structure of many social and information networks. \ouralg{} is simple, easy to implement, and theoretically backed. We provide several variants of it, each suitable for a different setting, showing excellent empirical results on a  variety of network datasets. 
Below are some of  its key features:
\setlength{\leftmargini}{10pt}
\begin{itemize}
	\item \textbf{Performance-accuracy tradeoff.} 
	To the best of our knowledge, ours is the first approximate \emph{sublinear} shortest paths algorithm in large networks. The fact that we allow small additive error, 
	gives rise to a trade-off between preprocessing time/space and per-inquiry time, and
	allows us to come up with a solution with efficient preprocessing \emph{and} fast per-inquiry time. 
	Notably, our most accurate (but slowest) variant, $\ouralg_{E}$, has near-perfect accuracy: more than $90\%$ of the inquiries are answered \emph{with no additive error}, and in \emph{all} networks, more than $99\%$ of the inquiries are answered with additive error at most $2$. See \Tab{wormhole-summary} for more details.
	\vspace{0.2cm}
	\item \textbf{Extremely rapid setup time.} 
	Our longest index construction time was just two minutes even for billion-edged graphs.
		For context,  \PLL{} and \MLL{} timed out on half of the networks that we tested, and for moderately sized graphs where \PLL{} and \MLL{} did finish their runs, \ouralg{} index construction was $\times100$ faster. Namely, \ouralg{} finished in seconds while \PLL{} took hours.  See \Tab{setupworm} and \Tab{LLsetups}.
	This rapid setup time is achieved due to the use of a sublinearly-sized index. For the largest networks we considered, it is sufficient to take an index of about 1\% of the nodes to get small mean additive error -- see Table \ref{tab:classes}. For smaller networks, it may be up to $6\%$.
	\vspace{0.2cm}
	\item \textbf{Fast inquiry time.} 
	Compared to  \bibfs{}, the vanilla version \ouralg{}$_E$ (without any index-based optimizations) is $\times 2$ faster for almost all graphs and more than $\times4$ faster on the three largest graphs that we tested. A simple variant \ouralg{}$_H$ achieves an order of magnitude improvement at some cost to accuracy: consistently 20$\times$ faster across almost all graphs, and more than 180$\times$ for the largest graph we have.
	See \Tab{wormhole-summary} for a full comparison. 
	Indexing based methods typically answer inquiries in microseconds; both of the aforementioned variants are still in the millisecond regime. 
	\vspace{0.2cm}
	\item \textbf{Combining \ouralg{} and the state of the art.} \ouralg{} works by storing a small subset of vertices on which we compute the exact shortest paths. For arbitrary inquiries, we route our path through this subset, which we call the core. We use this insight to provide a third variant, \ouralg{}$_M$ by implementing the state of the art for shortest paths, \MLL{}, on the core. This achieves inquiry times that are comparable to \MLL{} (with the same accuracy guarantee as \ouralg{}$_H$) at a fraction of the setup cost, and runs for massive graphs where \MLL{} does not terminate. We 
	explore  this combined approach in \Sec{prim}, and provide statistics in \Tab{mlloncore}.
	\vspace{0.2cm}
	\item \textbf{Sublinear query complexity.} The query complexity refers to the number of vertices queried by the algorithm.
	In a limited query access model where querying a node reveals its list of neighbors (see \Sec{setting}), the query complexity of our algorithm scales very well with the number of distance / shortest path inquiries made.
	To answer $5000$ approximate shortest path inquiries, our algorithm only observes between $1\%$ and $20\%$ of the nodes for most networks.
	In comparison, \bibfs{} sees more than $90\%$ of the graph to answer a few hundred shortest path inquiries. 
	See \Fig{summary2} and \Fig{qcost-small} for a comparison.
	\vspace{0.2cm}
	\item \textbf{Provable guarantees on error and performance.} 
	In ~\Sec{theorem} we prove a suite of theoretical results complementing and explaining the empirical performance. The results, stated informally below, are proved for the Chung-Lu model of random graphs with a power-law degree distribution~\cite{chung2002connected,chung2004average,chung2006volume}. 
	\begin{theorem}[Informal]
		In a Chung-Lu random graph $G$ with power-law exponent $\beta\in (2,3)$ on $n$ vertices, \ouralg{} has the following guarantees with high probability:
		\vspace{0.05cm}
		\\ \emph{-- \ Worst case error:} For all pairs $u,v$ of nodes in $G$, the path between $u$ and $v$ output by \ouralg{} is at most $O(\log \log n)$ longer than the shortest path.
		\vspace{0.05cm}
		\\
		\emph{-- \ Query complexity:} \ouralg{} has preprocessing query complexity of $o(n)$, and query cost per inquiry of $n^{o(1)}$.
		\vspace{0.07cm}
		\\
		In contrast, any method that does not preprocess the graph (including \bibfs{}) must have $n^{\Omega(1)}$ query cost per inquiry. 
		\label{thm:informal}
	\end{theorem}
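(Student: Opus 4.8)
The plan is to split the theorem into its three quantitative claims and reduce each to a structural property of Chung--Lu graphs with exponent $\beta\in(2,3)$. Three facts drive everything, and I would establish them first, each holding with high probability over the random graph: (i) the inner ring $\lz$ produced by \coregen{} consists of the highest-degree vertices and is connected with diameter $O(\log\log n)$ --- this is the ``ultra-small-world'' core that underlies the $\Theta(\log\log n)$ typical distance in these graphs; (ii) after deleting the core, every connected component of the periphery $\lt$ has size $n^{o(1)}$, and every vertex lies within distance $O(\log\log n)$ of the core; and (iii) the neighborhood of a typical vertex grows doubly-exponentially, so that two balls must each reach size $n^{\Omega(1)}$ before they can meet inside the dense core interior. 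Facts (i)--(ii) follow from the standard degree-threshold/hierarchical analysis of Chung--Lu (friendship-paradox edge traversal pushes one to higher degrees, reaching degree $n^{c}$ within $O(\log\log n)$ steps), while (iii) is the matching lower-bound side of the same ball-growth estimate.

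For the worst-case error, fix any pair $u,v$, let $P$ be a true shortest path and $Q$ the path returned by \ouralg{}. Since the only edges leaving a peripheral component go to the core, any $u$--$v$ path --- in particular $P$ --- must enter the core on the $u$ side and leave it on the $v$ side. \ouralg{} traverses the same peripheral segments and differs from $P$ only in how it crosses the core, where it routes through $\lz$ rather than possibly using $\lo$ shortcuts. The extra length is therefore at most the cost of replacing the core-crossing sub-path by a route through the inner ring, which is bounded by $\mathrm{diam}(\lz)+O(1)=O(\log\log n)$ via fact (i). I would make the ``same peripheral segments'' step precise by showing the entry/exit vertices used by $Q$ are within $O(1)$ of those on $P$, so the additive error telescopes to $O(\log\log n)$ simultaneously for \emph{all} pairs, not just typical ones.

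For the query complexity, preprocessing runs \coregen{}, which grows the core from high-degree seeds; by fact (i) the inner ring together with its neighborhood (the outer ring) has $o(n)$ vertices, and \coregen{} queries only these, giving $o(n)$ preprocessing queries. For a single inquiry $(s,t)$, \ouralg{} explores only the peripheral components of $s$ and of $t$ until it first touches the core, then answers using the already-stored inner-ring distances; by fact (ii) each such exploration visits $n^{o(1)}$ vertices and never expands a hub, so the per-inquiry cost is $n^{o(1)}$. The crucial point is that \ouralg{} \emph{stops} at the core boundary instead of expanding the high-degree vertices it discovers there.

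The lower bound is the main obstacle. One must show that \emph{any} preprocessing-free algorithm needs $n^{\Omega(1)}$ queries on a typical inquiry. For typical $s,t$ the peripheral components are disjoint and of size $n^{o(1)}$, so an algorithm can cheaply reach the core \emph{boundary} from either side but cannot yet connect them: every short $s$--$t$ path runs through the dense core interior. Unlike \ouralg{}, a preprocessing-free method has not stored the core and must \emph{expand} into it to discover a connection, and by fact (iii) the first exploration penetrating the interior expands a hub of degree $n^{\Omega(1)}$, forcing its next layer to query $n^{\Omega(1)}$ vertices; for \bibfs{} this is immediate, since a BFS layer crossing the boundary expands all discovered hubs at once. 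The delicate part --- where I expect to spend most effort --- is upgrading this from \bibfs{} to an arbitrary adaptive algorithm, since querying one hub reveals $n^{\Omega(1)}$ incident edges in a single step, so a naive counting bound does not suffice. I would instead use an indistinguishability argument over the Chung--Lu randomness, coupling two instances that agree on every queried vertex but differ on the unqueried hub-to-hub connections inside the core, and show that an algorithm making $o(n^{c})$ queries cannot distinguish them and hence cannot output a correct short $s$--$t$ path.
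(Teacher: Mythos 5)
The upper-bound half of your plan is essentially the paper's route --- route every inquiry through a small-diameter high-degree core, and bound the detour by that core's diameter --- and your per-inquiry complexity argument via peripheral fragment sizes is a legitimate alternative to the paper's argument (the paper instead bounds, for each BFS expansion, the probability $\mathrm{Vol}(\lz)/2m \geq n^{-\Theta(1/\log\log n)}$ of touching $\lz$, combined with the degree bound $n^{1/\log\log n}$ outside the core). The genuine gap on this side is your fact (i) itself. You assert that the inner ring produced by \coregen{} consists of the highest-degree vertices, and justify it by friendship-paradox greedy climbing --- but \coregen{} does not climb: it grows $\lz$ by repeatedly adding the outer-ring vertex with the most neighbors \emph{inside the current inner ring}, and there is no a priori reason this greedy rule recovers the degree ordering. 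The paper's main technical result (Theorem~\ref{thm:clc_in_lz}) is devoted to exactly this point: it couples the Chung--Lu edge generation with the growth of $\lz$, and uses Chernoff bounds to show that every vertex of expected degree at least $n^{1/\log\log n}$ accumulates inner-ring degree concentrated around $|\lz^\ell|/n^{1-\gamma}$, while every vertex of expected degree below $n^{\gamma}/8$ stays below half that threshold, so all high-degree vertices are absorbed before any low-degree one, giving $\clc \subseteq \lz$. Without an argument of this type, your facts describe the abstract Chung--Lu core, not the set the algorithm actually stores, and the error and complexity claims do not attach to \ouralg{}.

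The lower bound is where your plan would fail as written. Your mechanism --- ``expanding a hub forces its next layer to query $n^{\Omega(1)}$ vertices'' --- is backwards for the node-query model: querying a hub is a \emph{single} query that reveals all of its incident edges, and nothing forces any algorithm to expand the next layer. In particular, if both search sides could identify and query a top-degree hub (degree $\approx n^{1/(\beta-1)}$), those hubs are adjacent or share a neighbor with high probability, so the revealed edges would already contain an $s$--$t$ path after $O(1)$ further queries; no coupling over ``unqueried hub-to-hub connections'' can rule this out, because in that scenario the hubs \emph{are} queried and their connections revealed. The entire weight of the lower bound therefore rests on a claim your proposal never isolates: a preprocessing-free algorithm cannot, with few queries, even \emph{identify} (i.e., query) any vertex of polynomially large degree, because neighbor lists reveal IDs but not degrees, and under the Chung--Lu measure each query of a newly discovered vertex hits one of expected degree at least $n^{\alpha}$ with probability at most $n^{-\gamma}$ for some constant $\gamma = \gamma(\alpha,\beta,d) > 0$. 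This is precisely how the paper argues: union-bounding over $n^{\gamma/2}$ queries shows no high-degree vertex is ever queried, hence each side's revealed edge set $A_u$, $A_v$ stays below $n^{c/3}$ edges, and then the first-moment bound $\Pr[\text{connect}] \leq |A_u|\cdot|A_v|\cdot n^{-c}$ (which is your fact (iii), correctly identified) finishes the proof directly for \emph{all} adaptive algorithms --- no indistinguishability machinery needed. Your frontier-explosion remark is fine for \bibfs{} specifically, but the degree-identification lemma is the crux of the general statement, not a refinement of it.
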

	
\end{itemize}

\subsection{Setting}
\label{sec:setting}
We consider the problem of constructing a data structure for \emph{approximately} answering shortest-path inquiries between pairs of vertices $(s,t)$ in an undirected graph $G$,
given limited query access to the graph. 

\paragraph{Query model.} Access to the network is given through the standard node query model \cite{CDKLS16,BEOF22}, where we start with an arbitrary seed vertex as the ``access point'' to the network,
and querying a node $v$ reveals its list of neighbors $\Gamma(v)$.
Unlike existing index-based solutions, which perform preprocessing on the whole graph, we aim for a solution that queries and stores only a small fraction of the nodes in the network.%

\paragraph{Objective.}
Following the initialization of the data structure, the task is to 
answer multiple shortest path inquiries, where each  inquiry $\SP(s,t)$ needs to be answered with a valid path $p_0 p_1 \ldots p_\ell$ between $s=p_0$ and $t=p_\ell$, and the objective is to minimize the mean additive error measured over all inquiries. 
The additive error for an inquiry $\SP(s,t)$ is the difference between the length of the returned $s$--$t$ path and the actual shortest distance between $s$ and $t$ in $G$.
Depending on the specific application, one would like to minimize (a subset of) the additive error, running time, memory and/or node queries.

\paragraph{Core-periphery structure}
The degree distribution in social and information networks often follows a power-law distribution with exponent $2<\beta<3$, which results in a \emph{core-periphery} structure~\cite{SEIDMAN1983269, Zhang_2015, RPFM14, MGPV20,artico2020rare},  where the core is a highly connected component with good expansion properties, consisting of higher degree nodes, while the periphery is a collection of small, poorly connected components of  low degree.

Our data structure is designed for networks exhibiting these structural characteristics. It takes advantage of the structure by first performing a preprocessing step to acquire (parts of) the core of the network, and then answering approximate shortest path inquiries by routing through the core. The working hypothesis is that pairs of nodes that are sufficiently far apart will typically have the shortest path between them (or close to it) routed through the higher degree parts of the network.
This is somewhat reminiscent of approaches based on the highway dimension ~\cite{ADGW11, ADFGW11, ADFGW16} for routing in road networks, although the structural characteristics of these network types differ considerably.

\subsection{The algorithm}

\ouralg{} builds an explicit hierarchical core-periphery type structure with a sublinear inner ring and provides a framework which uses this structure to answer shortest path inquiries.
There are two phases:
\setlength{\leftmargini}{10pt}
\begin{itemize}
	\item A preprocessing step where we decompose the graph into three partitions, storing only the smallest one: a highly dense subgraph \emph{of sublinear size}.
	\item The phase where we answer inquiries: here the algorithm 
	(approximately) answers shortest path inquiries of the form \SP$(s,t)$ for arbitrary vertex pairs $(s,t)$.
\end{itemize}
We elaborate on the two phases.

\subsubsection{The decomposition}
It is well-documented that social networks exhibit a core-periphery structure; see,  e.g.,~\cite{SEIDMAN1983269, Zhang_2015, RPFM14, MGPV20} and the many references within. The \emph{core} is a  highly-connected component with good expansion properties and smaller effective diameter.
The \emph{periphery}, denoted $\lt$,  consists of smaller isolated communities that connect to the core, but are sparsely connected internally, and whose union is of linear size~\cite{chung2004average}. 
Therefore, when answering shortest path inquiries, it is reasonable  to first check if the two vertices are in the same peripheral community, and otherwise route through the core. 

A recent work of Ben-Eliezer et al.~\cite{BEOF22} observed that a more fine-grained constructive version of the core-periphery decomposition may be useful for algorithmic purposes in real world networks. In their work, the core is further decomposed into two layers: a sublinearly-sized \emph{inner ring}, denoted $\lz$,  which is very dense and consists of the highest degree vertices in the graph; and its set of neighbors, which we refer to as the \emph{outer ring}, $\lo$, where $\lo=\Gamma(\lz)\setminus \lz$. 
Their work shows (empirically) that even a sublinearly-sized inner ring is sufficient so that the union of the inner and outer rings effectively contains the core. Our work makes use of the this fine-grained core-periphery decomposition. During preprocessing, we acquire the nodes of the inner core, which then constitutes the (sublinear) index of our data structure.

\subsubsection{Acquiring the inner core}

To acquire the inner core $\lz$ we follow the procedure by~\cite{BEOF22}. 
The procedure gradually expands $\lz$, starting from an arbitrary seed vertex, 
and  iteratively adding vertices with high connectivity to the current core.
See \Alg{core} for the pseudo code and~\Fig{gen-L0} for an illustration of an expansion step.

\subsubsection{Answering shortest-path queries}

In the second phase, given a query  $\SP(s,t)$, \ouralg{} does the following. First, it checks if the two vertices are in the same peripheral component, by performing a truncated  \bibfs{} from both $s$ and $t$ up to depth two.  If the two trees collide, it returns the shortest path between $s$ and $t$. Otherwise, \ouralg{} continues both BFS traversals until it reaches the outer ring (from both $s$ and $t$). From here, it  takes a single step to reach the inner ring, and then performs a restricted  \bibfs{} on the subgraph induced by the inner ring vertices. We note that the choice of  \bibfs{} here is arbitrary, and we can use any shortest-path algorithm (including modern index-based approaches, initialized only on the inner core) as a black-box to find a shortest path in the inner ring. 

\Fig{Algtypes} illustrates a few typical cases encountered by the algorithm;  in the first two cases the algorithm returns a true shortest path, and in the third case the returned path is not a shortest path (thus incurring a nonzero additive error). 

We stress that a single decomposition is  subsequently used to answer \emph{all} shortest path queries. \Thm{informal} provides a strong theoretical guarantee on the performance of \ouralg{}. It is worth emphasizing that our notion of approximation is inspired by 
practical relaxations, and is
distinct from the one usually considered in theoretical works.

\section{Related Work}
There is a vast body of research on shortest paths and distances, spanning over many decades, and including hundreds of algorithms and heuristics designed for a variety of settings. Here, we only review several works that are most closely related to ours. For a more comprehensive overview, see the surveys~\cite{madkour2017survey,sommer2014shortest} and references therein.

\paragraph{Index-based approaches}
As mentioned earlier, a ubiquitous set of algorithms are based on  landmark/sketch  approaches \cite{ZYG19, li2020scaling, MLL}, with Pruned Landmark Labeling (\PLL) \cite{PLL} being perhaps the most influential one.
These algorithms follow a
two-step procedure: the first step generates an ordering of the vertices according to importance (based on different heuristics), and  the the second step
generates labeling from pruned shortest path trees constructed according to the ordering. 
	Then the shortest distance between an arbitrary pair of
	vertices $s$ and $t$ can be answered quickly based on their labels.
	However, even with pruning,  \PLL{} requires  significant setup time.
	Hence,  there have been many attempts to parallelize it ~\cite{psl, vc-pll, li2020scaling}.

	\paragraph{Embedding based approaches} Some recent approaches leverage embeddings of graphs to estimate shortest paths. Like in representation learning, they seek to find efficient representations of distances between pairs of nodes~\cite{orion, rigel}. A modern line of work also considers hyperbolic embeddings of the graphs, that are closely related to tree decompositions, to answer shortest path inquiries ~\cite{blasius, Kitsak_2023}. Recent work has also looked at accelerating this process by using GPU based deep learning methods~\cite{deep-learning, learningdist2, Kutuzov_2019}.
	Query-by-Sketch~\cite{QBS}
	considers the, related but incomparable, task of 
	answering \emph{shortest-path-graph inquiries}, where the goal  is to compute a subgraph containing exactly all shortest paths between a given pair of vertices.
	They propose an alternative labeling scheme to improve the scalability and inquiry times.

	\paragraph{BFS-based approaches}
	\sloppy
	Another set of algorithms is based on  Breadth First Search (BFS) or Bidirectional Breadth First Search (\bibfs{}), as they are exact methods with no preprocessing step. There has been substantial work in the last few years proving that  \bibfs{} is sublinear (i.e., proving complexity upper bounds of the form $n^{1-\Omega(1)}$) with high probability for several graph families. These include, e.g., hyperbolic random graphs
	(Blasiüs et al., \cite{blasius}),
	and graphs with a finite second moment and power-law graphs (Borassi and Natale~\cite{@BM19}).
	Very recently, Alon et al.~\cite{bibfs-expander} showed that \bibfs{} has sublinear query complexity for a broad family of expander graphs. Their work also proves query lower bounds of the form $n^{\Omega(1)}$ for traversal-based algorithms in Erd\H{o}s-Renyi and $d$-regular random graphs.
	
	\paragraph{Core-periphery based approaches}
	Several other works  exploit the core-periphery structure of networks ~\cite{BC06, ASK12, CTL}. Brady and Cohen~\cite{BC06} use compact routing schemes to design an algorithm with small additive error based on the resemblance of the periphery to a forest.  Akiba, Sommer and Kawarabayashi~\cite{ASK12} exploit the property of low tree-width outside the core, and in ~\cite{CTL}, the authors design a Core-Tree based index in order to improve preprocessing times on massive graphs. We note that in all these results, the memory overhead is super-linear.

	\paragraph{Worst case graphs}
	On the theoretical side, there have been many results on exact and approximate shortest paths in worst-case graphs, e.g., ~\cite{ASP1, ASP2, ASP3, ASP4, ASP6, ASP7},  with improvements as recently as the past year. Most of these focus on the $2$-approximation case, first investigated in the seminal work of Aingworth et al. ~\cite{ASP8}. We point the readers to  Zwick's  survey~\cite{Zwick01} on
	exact and approximate shortest-path algorithms, and 
	Sen's survey ~\cite{sen2009approximating} on distance oracles
	for general graphs with an emphasis on lowering pre-processing cost. 
	Notably, because we make a beyond worst case structural assumption that is common in large networks, namely, a core-periphery structure, both our results and algorithm substantially differ from the worst-case theoretical literature.

	\section{Algorithm}
	\ouralg{} utilizes insights about the structure of real world networks to cleverly decompose the graph and calculate approximate shortest paths. We discuss the algorithm and various steps in detail in this section; our two main components are a sublinear decomposition procedure, adapted from the recent work of Ben-Eliezer et al~\cite{BEOF22}, and a routing algorithm that takes advantage of this decomposition to find highly accurate approximate shortest paths. 
	
	\drawGenerateLzeroTwo{black}{black!48}{black!20}
	
	\subsection{The Structural Decomposition Phase}
	The former is a simple implementation of a structural decomposition provided in ~\cite{BEOF22} that gives us access to the inner ring; the inner ring is central to our procedure, and the bulk of our computation is done on it. This decomposition stratifies our graph into three sections: the inner ring, $\lz$, a dense component with the highest degree nodes, the outer ring ,$\lo$, the set of  neighbors of the inner ring, and the rest of the vertices which form the periphery, $\lt$, and  typically reside in fragments of size $O(\log n)$~\cite{LuThesis}. The construction of the inner ring is an iterative procedure that captures the highest degree vertices in sublinear time; we shall refer to this process as the \coregen{} procedure. The procedure gradually expands the inner ring, starting from an arbitrary seed vertex. The outer ring is the set of neighbors of the inner ring vertices (that are not already in the inner ring). At each step, the procedure removes, from $\lo$, the vertex with highest number of neighbors in $\lz$  and adds it to $\lz$. It then queries this vertex and adds its neighbors to $\lo$, while keeping track of how many neighbors $\lo$ vertices have in the updated $\lz$.  See~\Fig{gen-L0} for the first few steps of the inner ring generation procedure. The size of inner ring, in terms of a percentage of the vertex set, is given to the core-generation algorithm as a parameter and the process is iterated till  that size is reached. \Alg{core} has the pseudocode for \coregen. Here, $\Gamma(v)$ refers to the neighbors of a vertex $v$.
	
	We emphasize that \emph{the decomposition is performed only once}, and all subsequent operations are done using this precomputed decomposition. 
	
	\begin{algorithm}
		\caption{\texttt{CoreGen}$(v,s \;\small{(<1)},\text{ query access to } G=(V,E))$: starting from a seed vertex $v$, generates an inner ring of size $s|V|$; returns an inner ring, $\lz$, and an outer ring, $\lo$.}
		\begin{algorithmic}[1]
			\State size = 0
			\State $\lz\leftarrow v$, $\lo \leftarrow \Gamma(v)$
			\While{size < $s|V|$}
			\State Pick $u\in \lo$ with maximum number of neighbors in $\lz$ (break ties randomly)
			\State $\lz = \lz\cup \{u\}$, $\lo = \lo \cup \Gamma(u)\setminus \lz$
			\State size $+= 1$
			\EndWhile
			\State \Return $\lz, \lo$
		\end{algorithmic}
		\label{alg:core}
	\end{algorithm}
	
	\begin{algorithm}
		\caption{\texttt{\ouralg$(s,t, \lz, \lo, G)$}: final algorithm }
		\begin{algorithmic}[1]
			\State $T(s) =\Gamma(s), T(t) = \Gamma(t)$
			\State $C(s), C(t)\leftarrow \varnothing$
			\While{$T(s)\cap T(t) =\varnothing$ or if both $C(s), C(t) = \varnothing$}
			\For{both $s$ and $t$ (denoted by $i$ in the next block) }
			
			\If{$C(i) = \varnothing$ and $\Gamma(T(i))\cap \lz\neq \varnothing$}
			\State Expand $T(i)$ by a single level
			\State $C(i)\leftarrow \Gamma(T(i))\cap \lz$
			\State \textbf{stop} further expansion of $T(i)$
			\ElsIf{$C(i)=\varnothing$}
			\State Expand $T(i)$ by a single level
			\EndIf{}
			\EndFor{}
			\EndWhile{}
			\State
			\If{$T(s)\cap T(t) \neq\varnothing$}
			\State\Return \SP$(s,t)$ through $T(s)\cap T(t)$
			\Else{}
			\State Expand the \bibfs{} tree from  $C(s), C(t)$ in $G[\lz]$
			\State\Return \SP$(s,t)$ through $T(s)\cup G[\lz]\cup T(t)$
			\EndIf{}
		\end{algorithmic}
		\label{alg:wormhole}
	\end{algorithm}
	
	\subsection{The Routing Phase}
	The approach is simple: 
	assume the preprocessing phase acquires  the inner ring.  Upon an inquiry $(s,t)$, the algorithm  starts two BFS trees from both $s$ and $t$. It then expands the tree at each step till one of the followings happens:
	\begin{compactenum}
		\item the two trees intersect, or
		\item the trees reach the outer ring.
	\end{compactenum}
	
	If the search trees in the former do not intersect, \ouralg{} mandatorily routes shortest paths through the inner ring. Once in the inner ring, it computes the exact shortest path through it. 
	
	\subsection{Variants of \ouralg}
	In the  default  variant of \Alg{wormhole}, \ouralg{}$_E$, we use   the bidirectional breadth-first \bibfs{} shortest path algorithm as a primitive in order to compute the shortest path between two inner ring vertices; see~\cite{Pohl69} for a full description of \bibfs.  The theoretical analysis in \Sec{theorem} considers  this variant.
	
	\ouralg{}$_H$ is a  simple variant of \Alg{wormhole} where we start the expansion of the \bibfs{} trees in $G[\lz]$ only from the vertex $v_s=\text{argmax}_{v\in C(s)}deg(v)$ (and similarly $v_t$ for $C(t)$, the highest degree vertices). 
	We note that for this variant, 
	some of the theoretical results in \Sec{theorem} no longer hold:
	while \ouralg{}$_H$  satisfies all sublinearity properties, the error bounds in \Thm{add-error} no longer hold, as this variant only computes an approximate shortest path inside the inner ring. An empirical comparison of the two is done in \Sec{vsbibfs} and \Tab{wormhole-summary}.
	
	In \ouralg{}$_{M}$ we combine \ouralg{} with 
	an index-based algorithm restricted to the core. 
	While indexing-based algorithms are quite expensive, in terms of both preprocessing and space cost, they lead to much lower times per inquiry. Therefore this variant is suitable when faster inquiry times are preferable to low space requirements.
	\ouralg{}$_{M}$ makes the index creation cost substantially lower  (compared to generating it for the entire graph) while providing speedups for answering shortest path inquiries  compared to the \bibfs{} implementation. 
	We discuss this briefly in \Sec{prim} but leave a complete systematic exploration of these options for future research.

	\section{Theoretical Analysis} \label{sec:theorem}
	
	It is a relatively standard observation that many social networks exhibit, at least approximately, a power-law degree distribution (see, e.g.,~\cite{artico2020rare} and the many references within). The Chung-Lu model~\cite{LuThesis} is a commonly studied random graph model which admits such degree distribution. 
	
	In this section we provide a proof-of-concept for the correctness of \ouralg{} on Chung-Lu random graphs, aiming to explain the good performance in practice through the study of a popular theoretical model. 
	We sometimes only include proof sketches instead of full proofs, in the interest of saving space.

	\subsection{Preliminaries}
	
	We start by defining power-law networks and the Chung-Lu model~\cite{chung2002connected,chung2004average,chung2006volume}, followed by a set of useful results from Lu~\cite{LuThesis}.
	A network is said to have a power-law degree distribution when for sufficiently large $k$, the fraction of nodes with degree $k$, denoted $p(k),$ follows a power-law. That is, $p(k) \propto k^{-\beta}$. Multiple studies have shown that real world graphs indeed follow (or approximately follow) a power-law degree distribution, and the exponent $\beta$ is typically in the range $2<\beta<5$ or so ~\cite{Newman03, FFF99, Albert_1999, SCMRSBC21}. Theoretically, the regime $2 < \beta < 3$ is the most interesting, and thus we focus our study on this regime.
	
	In a series of works, Chung and Lu suggested a model to generate graphs according to a power-law degree distribution, in order to generate massive graphs that capture properties of real-world graphs~\cite{chung2002connected,chung2004average,chung2006volume}. 
	In this model, 
	one is given a list of $n$ expected degrees,  $w_1\geq w_2 \geq \ldots \geq w_n$, where in our case we assume that the expected  degrees follow 
	a  power-law degree distribution with $2<\beta<3$.  Then, for each pair of nodes $i,j$, the edge between them is instantiated with probability $\frac{w_i\cdot w_j}{W}$, where $W=\sum_{i=1}^n w_i$ (and it is assumed that for any $i,j$, $w_i\cdot w_k\leq W$). Note that indeed by this definition, it holds that for every vertex  $v$, $\EX[d(v)]=w(v)$, and  therefore that $\EX[|E(G)|]=W/2$.
	We denote this distribution on graphs as $\mathcal{G}_{CL}$, and  denote a graph drawn from it by $G\sim \mathcal{G}_{CL}$.

	In all that follows we assume that $G\sim\mathcal{G}_{CL}$, and the probabilities are taken over the generation process.
	We let $\gamma=1/\log \log n$, and let $\clc$ denote the set of vertices with degree at least $t=n^\gamma$.
	
	\begin{theorem}[Theorem 4 in~\cite{chung2004average}]\label{thm:chung-lu}
		Suppose a power law random graph with exponent $2<\beta<3$, average  degree  $d$  strictly  greater  than  1, and  maximum  degree  $d_{max}>\log n/\log\log n$. Then almost surely the diameter is $\Theta(\log n)$, the diameter of the $\clc$ core is $O(\log \log n)$ and almost all vertices are within distance $O(\log \log n)$ to $\clc$.
	\end{theorem}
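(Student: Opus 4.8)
The plan is to exploit the heavy upper tail of the power law: for $2<\beta<3$ the weighted mass $\sum_{j:\,w_j\ge y} w_j$ concentrates on the very highest-weight vertices, which lets a BFS ball grow doubly-exponentially fast in weight. Fix the weight sequence $w_1\ge\cdots\ge w_n$ with $w_i\asymp (n/i)^{1/(\beta-1)}$, so that $W=\sum_i w_i=\Theta(n)$ and the number of vertices of weight at least $y$ is $\Theta(n\,y^{-(\beta-1)})$. The single estimate that drives everything is that, for a vertex $v$ of weight $w$, the expected number of its neighbors of weight at least $y$ is
$$\frac{w}{W}\sum_{j:\,w_j\ge y} w_j \;=\; \Theta\!\left(\frac{w}{W}\cdot n\, y^{2-\beta}\right)\;=\;\Theta\!\left(w\, y^{2-\beta}\right),$$
using $\int_y^{\infty} z^{1-\beta}\,dz \asymp y^{2-\beta}$ for $\beta\in(2,3)$ (the integral is controlled by its lower endpoint). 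Setting this quantity to a constant threshold shows that $v$ typically has a neighbor of weight up to roughly $w^{1/(\beta-2)}$, and since $1/(\beta-2)>1$ this is a strictly higher power of $w$.

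Given this, I would prove the core-diameter bound by a weight-amplification (ball-growing) argument. First I would show that the set $A$ of vertices of weight at least $\sqrt{W\log n}$ forms a near-clique with probability $1-o(1)$: for $u,v\in A$ the edge probability $w_uw_v/W\ge\log n$, so a union bound over the polynomially many pairs makes $A$ a clique, hence $\mathrm{diam}(A)=O(1)$. I would then define a hierarchy of thresholds $y_0<y_1<\cdots$ with $y_{k+1}\approx (y_k/\polylog n)^{1/(\beta-2)}$ and, using the estimate above with a Chernoff bound, show that with high probability \emph{every} vertex of weight at least $y_k$ has a neighbor of weight at least $y_{k+1}$ (the polylog slack keeps the expected neighbor count at $\Omega(\log n)$, making the per-vertex failure probability $n^{-\omega(1)}$, which survives the union bound over all $\le n$ vertices). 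Starting from the core threshold $y_0=n^\gamma=n^{1/\log\log n}$, the recursion $\log y_{k+1}\approx (\beta-2)^{-1}\log y_k$ reaches $A$ after $O(\log\log n)$ steps, so every core vertex is within $O(\log\log n)$ of $A$, and with $\mathrm{diam}(A)=O(1)$ this gives $\mathrm{diam}(\clc)=O(\log\log n)$.

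The statement that almost all vertices lie within distance $O(\log\log n)$ of $\clc$ follows from the same recursion run from the bottom: a vertex of weight $w=\Theta(1)$ reaches weight $n^\gamma$ once $(\beta-2)^{-k}\log w\ge\gamma\log n$, i.e. after $k=O(\log\log n)$ steps. The only vertices left out are those whose first amplification step fails; I would bound the expected number of such vertices by $o(n)$ and invoke Markov's inequality, which yields the ``almost all'' quantifier (the $o(n)$ exceptional set absorbs genuinely low-weight vertices and the tiny components detached from the giant core). For the $\Theta(\log n)$ diameter, the upper bound combines these facts with a bound on the depth of the periphery: every vertex reaches the core along a path of length $O(\log n)$ because peripheral components have size $O(\log n)$ (a whisker-length estimate of the type established in Lu's thesis), so no two vertices are farther than $O(\log\log n)+O(\log n)=O(\log n)$ apart. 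The matching $\Omega(\log n)$ lower bound comes from exhibiting, with high probability, a pair of vertices at the ends of sufficiently long low-degree whiskers via a first/second-moment computation for $\mathcal{G}_{CL}$.

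The main obstacle is making the ball-growing step hold uniformly over all vertices simultaneously rather than only in expectation: the recursion is clean in the mean, but to union-bound over all $\le n$ starting vertices one must keep the expected per-step neighbor count at $\Omega(\log n)$, which forces the polylogarithmic slack in $y_{k+1}\approx (y_k/\polylog n)^{1/(\beta-2)}$, and one must check that this slack does not accumulate over the $O(\log\log n)$ levels enough to spoil the doubly-exponential growth. The delicate regime is right at the core threshold $n^\gamma$, where the expected neighbor counts are smallest and the concentration bounds are tightest; controlling the dependence between successive BFS layers (so that conditioning on earlier layers does not destroy the independence needed for the Chernoff estimates) is the technical heart of the argument. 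The full-diameter upper bound is secondary but still requires the separate, model-specific input that peripheral whiskers have length $O(\log n)$ with high probability.
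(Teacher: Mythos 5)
The paper never proves this statement: it is imported verbatim as Theorem~4 of Chung and Lu~\cite{chung2004average} and used as a black box, so there is no internal proof to compare against. What you have written is, in substance, a reconstruction of Chung and Lu's own argument --- the weight-amplification recursion $\log y_{k+1}\approx(\beta-2)^{-1}\log y_k$, the dense set of top-weight vertices, the $O(\log\log n)$-step climb starting from the core threshold $n^{1/\log\log n}$, and the whisker analysis giving the $\Theta(\log n)$ diameter are exactly the ingredients of their proof. So your route is the standard one for this theorem, not a genuinely different one, and its overall architecture is correct.

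One step is wrong as stated, though it is repairable. In the model as the paper defines it (Section~4.1), edge probabilities are $w_uw_v/W$ under the standing assumption $w_uw_v\le W$ for all pairs; taking $u=v$ gives $w_u\le\sqrt{W}$, so no vertex has weight $\ge\sqrt{W\log n}$ and your set $A$ is empty --- and in any event ``edge probability $\ge\log n$'' is vacuous since probabilities cap at $1$. The standard fix (and what Chung--Lu effectively do) is to take $A$ to be the vertices of weight at least $\sqrt{W}/\log^{c}n$, of which there are $n^{\Omega(1)}$ many when $2<\beta<3$, and to show $\mathrm{diam}(A)\le 2$ with high probability via common neighbors: for $u,v\in A$ the expected number of common neighbors is $w_uw_v\sum_x w_x^2/W^2=n^{\Omega(1)}$ because $\sum_x w_x^2$ is polynomially superlinear in this $\beta$ range, so a Chernoff bound plus a union bound over the polynomially many pairs suffices. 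With that repair, your top-down argument is sound: starting from $y_0=n^{1/\log\log n}$, the $\polylog$ slack per level is an additive $O(\log\log n)$ term in $\log y_k$ and never competes with $\log y_k\ge\log n/\log\log n$, and each level's event (``every vertex of weight $\ge y_k$ has a neighbor of weight $\ge y_{k+1}$'') is a property of the whole graph established by a union bound, so no delicate conditioning between levels is actually needed there. The genuinely delicate point is the one you flag at the bottom: from constant-weight vertices the expected number of neighbors above any growing threshold is $o(1)$, so no union bound is possible and the ``almost all vertices within $O(\log\log n)$ of $\clc$'' claim must go through the expectation-plus-Markov argument you sketch, which is also how the original proof handles it.
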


	\begin{claim}[Fact 2 in~\cite{chung2002connected}]\label{clm:exp-deg}
		With probability at least $1-1/n$, for all vertices $v\in V$ such that $w(v)\geq 10\log n$
		\[d(u)\in\left[\frac{1}{2}w(u), 2w(u)\right]
		\text{ and }
		|E(G)|\in [W/4,W]\;,\]
		and for all vertices with $w(v)\leq 10\log n$, $d(v)\leq 20\log n.$
	\end{claim}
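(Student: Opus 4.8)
The statement is a standard concentration result, and the plan is to apply Chernoff bounds to each degree and to the edge count, and then take a union bound. The starting observation is that for a fixed vertex $v$ the degree $d(v)=\sum_{u\neq v}X_{vu}$ is a sum of independent indicators $X_{vu}\sim\mathrm{Ber}(w(v)w(u)/W)$, so $\EX[d(v)]=w(v)(W-w(v))/W$. Because the model assumes $w(v)^2\le W$, the subtracted term $w(v)^2/W$ is at most $1$, so $\mu_v:=\EX[d(v)]$ lies in $[w(v)-1,\,w(v)]$. Keeping track of this small gap between $\mu_v$ and $w(v)$ is the only genuinely delicate point in the argument, and I would dispose of it up front by noting that whenever $w(v)\ge 10$ we have $\tfrac{9}{10}w(v)\le\mu_v\le w(v)$.

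For the high-weight vertices, those with $w(v)\ge 10\log n$, I would establish $d(v)\in[\tfrac12 w(v),2w(v)]$ with the multiplicative Chernoff bound. The event $d(v)>2w(v)$ implies an upward deviation of $d(v)$ from $\mu_v$ by a constant factor (since $2w(v)\ge 2\mu_v$), and $d(v)<\tfrac12 w(v)$ implies a downward relative deviation of roughly $\tfrac12$ (using $\mu_v\le w(v)$ and $\mu_v\ge\tfrac{9}{10}w(v)$); each tail is therefore at most $\exp(-c\,\mu_v)=n^{-\Omega(1)}$ for an absolute constant $c>0$. The hypothesis $w(v)\ge 10\log n$ forces $\mu_v=\Omega(\log n)$ to be large enough that, after a union bound over the at most $n$ high-weight vertices, the contribution stays within the overall $1/n$ budget.

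For the low-weight vertices, those with $w(v)\le 10\log n$, a constant-$\delta$ bound is too weak because $\mu_v$ may be only a constant, so I would instead use the Poisson-type upper tail $\Pr[d(v)\ge t]\le e^{-\mu_v}(e\mu_v/t)^{t}$ at the absolute threshold $t=20\log n$. This bound is increasing in $\mu_v$ on the range $(0,t)$, so substituting the worst case $\mu_v=10\log n$ already gives $n^{-\Omega(1)}$; the separation between $10\log n$ and the threshold $20\log n$ is exactly what drives the decay, and a further union bound over these vertices keeps their total failure within budget, yielding $d(v)\le 20\log n$ for all of them.

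Finally, for the edge count, $|E(G)|=\sum_{i<j}X_{ij}$ is again a sum of independent indicators, with $\EX[|E(G)|]=\tfrac12\bigl(W-\sum_i w_i^2\bigr)$. The crude bound $\sum_i w_i^2\le(\max_i w_i)\sum_i w_i\le\sqrt{W}\cdot W=W^{3/2}$ shows that the correction term $\tfrac{1}{2W}\sum_i w_i^2\le\tfrac12\sqrt{W}=o(W)$, so $\EX[|E(G)|]=(1-o(1))\,W/2$, which sits comfortably in the interior of the target window $[W/4,W]$. Since the average degree exceeds $1$ we have $W\ge n$, so a single multiplicative Chernoff bound with a small constant relative deviation places $|E(G)|$ inside $[W/4,W]$ except with probability $e^{-\Omega(W)}=e^{-\Omega(n)}$. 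Summing the three failure budgets, the high-weight vertices, the low-weight vertices, and the edge-count event, and arranging each to be at most $1/(3n)$, yields the claimed probability of at least $1-1/n$. The main obstacle, such as it is, is purely bookkeeping: handling the $\mu_v$-versus-$w(v)$ discrepancy and selecting the Poisson-form tail for the small-mean vertices; everything else is a routine Chernoff-and-union-bound calculation.
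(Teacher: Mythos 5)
The paper itself contains no proof of this claim: it is imported verbatim as Fact~2 from Chung and Lu~\cite{chung2002connected}, so there is no in-paper argument to compare against. Your Chernoff-plus-union-bound strategy is the standard (and essentially the original source's) proof of such degree-concentration statements, and its structure is sound: high-weight vertices via multiplicative Chernoff around $\mu_v$, low-weight vertices via the Poisson-form tail $e^{-\mu_v}(e\mu_v/t)^t$ at the absolute threshold $t=20\log n$ (your observation that this bound is increasing in $\mu_v$ for $\mu_v<t$ is correct, and the resulting exponent $\approx -3.86\log n$ is comfortably small), and the edge count via a single Chernoff bound at scale $\EX[|E(G)|]=(1-o(1))W/2$ with $W\geq n$. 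Your up-front handling of the discrepancy between $\EX[d(v)]$ and $w(v)$ is also careful; note the paper sidesteps this entirely by defining the model so that $\EX[d(v)]=w(v)$ exactly.

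The one soft spot is the unverified assertion that the high-weight vertices stay ``within the overall $1/n$ budget.'' With the literal constants this is where the argument is tight: the lower-tail event $d(v)<\tfrac12 w(v)$ corresponds to $\delta\approx\tfrac12$, and the weak bound $\exp(-\delta^2\mu_v/2)$ with $\mu_v\approx 10\ln n$ gives only about $n^{-5/4}$ per vertex, hence about $n^{-1/4}$ after a union bound over up to $n$ vertices --- short of the claimed $1/n$. Even the tight Chernoff exponent $\delta+(1-\delta)\ln(1-\delta)\approx 0.153$ yields only about $n^{-1.53}$ per vertex, i.e.\ roughly $n^{-0.53}$ overall, if $\log$ denotes the natural logarithm; the numbers close (per-vertex tail below $n^{-2}$) only if $\log$ is base $2$ or the constant $10$ is enlarged. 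Since these constants are inherited from the cited source, this is bookkeeping rather than a conceptual flaw, but a complete write-up must resolve it rather than assert it. Separately, your displayed formula $\EX[|E(G)|]=\tfrac12\bigl(W-\sum_i w_i^2\bigr)$ is missing a factor $1/W$ on the correction term; the correct quantity $\frac{1}{2W}\sum_i w_i^2$ is what you actually use in the next sentence, so this is only a typo.
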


	\subsection{Sublinearity of Inner Ring}
	Chung and Lu proved ~\cite{chung2002connected, chung2004average, chung2006volume} that in random graphs with a power-law degree distribution and exponent $2<\beta<3$, there exists a core  with small diameter, so that most of the vertices in the graph are close to it. Here we extend their result to show that this core is of sublinear size and that \coregen{} indeed captures it; i.e., $\clc \subseteq \lz$. 
	
	\begin{theorem}[\ouralg{} inner ring $\supseteq$ Chung-Lu core]\label{thm:clc_in_lz}
		Consider a graph $G$ generated according to the Chung-Lu process with $\beta > 2$, and suppose we run \ouralg{} on $G$ with an inner ring of size $n^{1-\Theta(1/\log\log n)}$. Then with high constant probability, $\lz$ contains all vertices with degree greater than $n^{1/\log \log n}$. 
	\end{theorem}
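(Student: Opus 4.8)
The plan is to combine two facts: that the high-degree set $\clc$ is small enough to fit inside the inner ring, and that the greedy \coregen{} rule adds vertices in (approximately) nonincreasing order of weight, so that it exhausts all of $\clc$ well before its size budget of $n^{1-\Theta(\gamma)}$ is spent (here $\gamma=1/\log\log n$). Throughout I would work with expected degrees (weights) and convert to actual degrees at the very end via \Clm{exp-deg}: since any vertex of actual degree $>t=n^{\gamma}$ has weight $w(v)>t/2$, it suffices to show that every vertex of weight exceeding $t/2$ is placed in $\lz$.

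First I would bound the size of the target set. Integrating the power law, the number of vertices of weight at least $w$ is $\Theta(n\,w^{-(\beta-1)})$, so the number of vertices of weight $>t/2$ is $\Theta(n^{1-(\beta-1)\gamma})$. Choosing the hidden constant $c$ in the inner-ring size $n^{1-c\gamma}$ so that $c<\beta-1$ makes the ring a multiplicative factor $n^{(\beta-1-c)\gamma}=n^{\Theta(\gamma)}\to\infty$ larger than this count; this polynomial (in $t$) slack is precisely what will let me absorb boundary effects and the occasional non-core vertex greedy picks up along the way. I would also record that $\mathrm{vol}(\clc)=\sum_{v\in\clc}w(v)=\Theta(n^{1-(\beta-2)\gamma})=o(W)$, since $W=\Theta(n)$ (the average degree is constant for $\beta>2$); thus $\clc$ carries a vanishing fraction of the total volume, which is the root of the main difficulty below.

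The key structural observation is that in the Chung--Lu model the number of neighbors a vertex $u$ has inside a set $S$ has expectation $w(u)\cdot\mathrm{vol}(S)/W$, i.e. it is proportional to $w(u)$; hence, up to concentration, \coregen{}'s criterion (``most neighbors in the current $\lz$'') orders candidates by weight. I would make this quantitative in three steps: (i) \emph{reachability} --- once $\lz$ has accumulated nontrivial volume, a vertex of weight $>t/2$ misses $\lz$ entirely only with probability at most $\exp(-w(u)\,\mathrm{vol}(\lz)/W)=o(1/n)$, so all such vertices sit in $\lo$ as live candidates and stay there until added; (ii) \emph{concentration} --- for these vertices the expected neighbor count in $\lz$ is $w(u)\,\mathrm{vol}(\lz)/W=n^{(3-\beta)\gamma}\to\infty$ in the regime $2<\beta<3$ of interest, so a Chernoff bound plus a union bound gives a simultaneous $(1\pm o(1))$ estimate for all candidates; and (iii) \emph{monotonicity} --- combining (i) and (ii), at every step the live candidate of largest weight has (whp) at least as many neighbors in $\lz$ as any vertex of substantially smaller weight, so greedy adds vertices in essentially nonincreasing weight order. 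Summing additions until the budget $n^{1-c\gamma}$ is reached, and using the size bound of the previous paragraph, the set of added vertices reaches below weight $t/2$ and therefore contains every vertex of weight $>t/2$; converting back via \Clm{exp-deg} yields $\clc\subseteq\lz$ with the claimed probability.

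The main obstacle is exactly the boundary region. Because $\mathrm{vol}(\clc)=o(W)$, a vertex of weight just below $t$ and one just above $t$ have expected neighbor counts in $\lz$ differing only by a constant factor, and concentration to within $(1\pm o(1))$ is not sharp enough to order vertices of nearly equal weight; consequently greedy does not respect the weight order exactly near the threshold, and one cannot hope to prove that no sub-threshold vertex is ever chosen. Rather than forcing a clean separation, I would lean on the slack built into the budget: it is enough to show that the number of misorderings --- sub-threshold vertices added before some super-threshold vertex --- is far smaller than the slack factor $n^{(\beta-1-c)\gamma}$, so that the budget still sweeps in all of $\clc$. A secondary technical point is the bootstrapping phase from an arbitrary seed, where $\lz$ is small, neighbor counts are tiny integers, and the proportionality/concentration heuristics are weak; here I would argue separately that the process climbs into the high-weight region within a few steps (using that high-weight vertices are adjacent to most of the graph, and that most vertices lie within $O(\log\log n)$ of the core by \Thm{chung-lu}), after which $\mathrm{vol}(\lz)$ is large enough for the arguments above to take over.
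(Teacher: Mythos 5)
Your proposal follows essentially the same route as the paper's proof: work with expected degrees and convert to actual degrees only at the end via \Clm{exp-deg}; observe that a candidate's number of neighbors in the current inner ring grows in proportion to its weight; use Chernoff concentration to argue that the greedy rule of \coregen{} adds vertices in approximate weight order; and close with the same budget accounting (the number of above-threshold vertices is smaller than the ring size by a factor polynomial in $n^{\gamma}$). The instructive differences are in how the paper dispatches the two obstacles you flag, both of which it makes disappear rather than confronts. For your boundary problem, the paper never attempts to order vertices of comparable weight: it compares weight $\geq n^{\gamma}$ against weight $< n^{\gamma}/8$, a constant-factor gap that constant-factor concentration can separate, and simply permits the middle band $[n^{\gamma}/8, n^{\gamma})$ to be swept into $\lz$ as well --- such vertices are few, so the budget absorbs them. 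This is the clean form of your ``count the misorderings'' idea: any misordered pick automatically has weight within a constant factor of the threshold, so no separate counting argument is needed.

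For your bootstrap difficulty, and for an issue you do not flag, the paper introduces a deferred-decisions coupling: a process $P$ that draws the edges incident to a vertex only at the moment that vertex joins $\lz$, which generates the same distribution $\mathcal{G}_{CL}$. This device is what legitimizes your step (ii): as written, ``Chernoff plus a union bound over candidates'' does not apply, because $\lz^{\ell}$ is an adaptive function of the very edges being counted, so fixed-set concentration is not available; under the coupling, each newly added vertex hits a fixed outside vertex $u$ with fresh randomness, and the counts concentrate. Relatedly, instead of your circular estimate $\mathrm{vol}(\lz)\approx\mathrm{vol}(\clc)$ (circular because it presumes what is being proved), the paper lower-bounds the per-step hit probability by $w(u)\davg/W \geq n^{\gamma-1}$, using only that added vertices have weight at least $\davg$; combined with the $\Theta(\log n)$ factor built into the ring budget, every relevant expected count is $\Omega(\log n)$ once $\Theta(\log n \cdot n^{1-\gamma})$ vertices have been added. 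This removes the need for a separate bootstrap phase and works for every $\beta > 2$, whereas your concentration scale $n^{(3-\beta)\gamma}$ degenerates when $\beta \geq 3$, a regime the theorem statement still covers.
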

	\begin{proof}[Proof Sketch]
		
		Consider a process $P$ which simultaneously constructs $G$ as the core grows as outlined in \coregen{}: 
		Initially the graph $G$ has no edges, and $\lz$ is an empty set. Every time a vertex $v$ is added to $\lz$, $P$ iterates over all nodes $u \in V\setminus \lz$ and draws a coin with probability $\frac{w(v)\cdot w(u)}{2W}$ to determine whether the edge $(u,v)$ is in $E$ or not. Once $\lz$ is acquired  (i.e., when $\lz$ is of size $t=100 \log n\cdot n^{1-1/\log\log n}$), $P$ iterates over  all pairs of vertices that were not yet examined and determines their edges.
		Clearly this process generates a graph according to the distribution $G\sim\mathcal{G}_{CL}$ (since we only changed the order in which the edges were decided).  
		
		Let $\davg$ denote the expected average degree in the graph, and let $\lz^{\ell}$ denote the set of vertices added to $\lz$ after $\ell$ steps of the above process. Finally, let $\gamma=1/\log\log n$.
		It is easy to show that with high probability, except for a negligible fraction, all vertices added to the core have expected degree $w(V)\geq \davg$. Conditioned on this event, 
		for every $u$ with $w(u)\geq n^{\gamma}$, 
		every time a vertex $v$ is added to the core, $v$ creates an edge with $u$ with probability 
		\[\frac{d(u)\cdot d(v)}{W}\geq \frac{n^\gamma\cdot \davg}{n\davg}=\frac{1}{n^{1-\gamma}}.
		\]
		Let $\chi_i$ denote the event that the $i$-th vertex added to $\lz$ creates an edge with $u$. Then by the above, $\EX[\chi_i]\geq \frac{1}{n^{1-\gamma}}$.
		Observe that  $d_{\lz}(u) = \sum_{i} \chi_i =\frac{|\lz^{\ell}|}{n^{1-\gamma}}$, 
		and set $\mu \eqdef \frac{|\lz^{\ell}|}{n^{1-\gamma}}$. Note that the $\chi_i$ variables are independent $\{0,1\}$ random variables, and that  for $\ell=c\log n\cdot n^{1-\gamma}$,
		$\mu\geq c\log n$. Therefore, by the multiplicative Chernoff bound, 
		
		\begin{align*}
			\Pr&\left[\left|d_{\lz^{\ell}}(u)-\mu\right|> \frac{1}{4}\mu \right] 
			< 2\exp\left(\frac{-\frac{1}{16}\cdot \mu}{3}\right)\leq \frac{1}{n^2}.
		\end{align*}
		
		That is, we get that with probability $1-1/n$, all vertices with expected degree $w(u)\geq n^{\gamma}$ have $d_{\lz}(u)\geq \frac{3}{4}n^{\gamma}$.
		A similar analysis can be used to prove that all vertices $u$ with expected degree $w(u)< n^{\gamma}/8$ will have $d_{\lz^{2\ell}}(u)\leq \frac{1}{2}n^{\gamma}$ with high probability.
		Hence, after performing additional $\ell$ steps, we will have that with high probability, all vertices with expected degree  at least $n^{\gamma}$ will be added to $\lz^{2\ell}$ before any vertex with expected degree less than $n^{\gamma}/8$. The proof concludes by noticing that by Lemma~\ref{clm:exp-deg}, the degrees of all vertices will either be as expected 
		up to a constant multiplicative factor, or have an expected degree below, say $n^{\gamma} / 8$, and actual degree not higher than $n^{\gamma} / 2$.
	
	\end{proof}
	\subsection{Approximation Error}
	Now that we have a sublinear inner ring that contains the Chung-Lu core, we must show that routing paths through it incurs only a small penalty. Intuitively, the larger the inner ring, the easier this is to satisfy: if the inner ring is the whole graph, the statement holds trivially. Therefore the challenge lies in showing that we can achieve a strong guarantee in terms of accuracy even with a sublinear inner ring. We prove that \ouralg{} incurs an additive error at most $O(\log \log n)$ for all pairs, which is much smaller than the diameter $\Theta(\log n)$. 
	
	\begin{theorem}[Good additive error]\label{thm:add-error}
		If $\clc \subseteq \lz$, then with high probability, for all pairs $(s,t)$ of nodes, the additive error of our algorithm for the inquiry $\SP(s,t)$ is at most 
		$O(\log\log n).$
	\end{theorem}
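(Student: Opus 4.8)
The plan is to compare the length $L_{alg}$ of the path returned by \ouralg{} against the true distance $d(s,t)$, and to show their difference is $O(\log\log n)$ for every pair simultaneously. First I would dispose of the easy case: whenever the two truncated BiBFS trees collide, \ouralg{} returns a genuine shortest path (this is the standard correctness of bidirectional BFS, as the first collision occurs at a midpoint of an $s$--$t$ shortest path), so the additive error is $0$. It therefore suffices to analyze the \emph{routing} case, in which both trees reach the outer ring without meeting and the returned path has the form
\[
L_{alg} = \delta(s,\lz) + d_{\lz}\big(C(s),C(t)\big) + \delta(t,\lz),
\]
where $\delta(\cdot,\lz)$ is the distance to the inner ring, $d_{\lz}$ is distance in the induced subgraph $G[\lz]$, and $C(s),C(t)\subseteq\lz$ are the inner-ring entry sets, each at distance $\delta(\cdot,\lz)$ from its source up to an additive constant fixed by the exact stopping rule of \Alg{wormhole}.

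Next I would extract a matching lower bound on $d(s,t)$ from the fact that the trees did \emph{not} collide. Since each tree $T(i)$ is grown level by level and is halted only upon reaching the outer ring, at the moment of halting it contains (essentially) the ball $B(i,\delta(i,\lz))$. If $d(s,t)\le \delta(s,\lz)+\delta(t,\lz)$, then a suitable point $w$ on a shortest $s$--$t$ path satisfies $d(s,w)\le\delta(s,\lz)$ and $d(w,t)\le\delta(t,\lz)$, forcing the two balls—and hence the two trees—to intersect, a contradiction. Thus in the routing case $d(s,t)\ge \delta(s,\lz)+\delta(t,\lz)-O(1)$, and combining with the expression for $L_{alg}$ yields
\[
L_{alg}-d(s,t)\ \le\ d_{\lz}\big(C(s),C(t)\big) + O(1),
\]
so the whole theorem reduces to bounding the in-ring distance between the entry sets.

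The heart of the argument, and what I expect to be the main obstacle, is the claim that $\operatorname{diam}(G[\lz]) = O(\log\log n)$ with high probability; this immediately gives $d_{\lz}(C(s),C(t))\le \operatorname{diam}(G[\lz]) = O(\log\log n)$ since $C(s),C(t)$ are nonempty subsets of $\lz$. By \Thm{chung-lu} the Chung--Lu core $\clc$ already has diameter $O(\log\log n)$, and $\clc\subseteq\lz$ by hypothesis, so distances \emph{within} $\clc$ are controlled. The delicate point is to route every vertex $v\in\lz$ to $\clc$ along a path that stays inside $\lz$. I would argue this via the standard Chung--Lu ball-growing estimate: from a vertex of expected degree $w$, its highest-weight neighbor has expected degree polynomially larger (using \Clm{exp-deg} to pass between expected and realized degrees), so after $O(\log\log n)$ hops the weight crosses $n^{\gamma}$ and the path enters $\clc$. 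Crucially, because the inner ring consists, up to lower-order terms, of the highest-degree vertices down to a threshold $w_{\min}=n^{\Theta(\gamma)}$ well below $n^{\gamma}$—exactly the regime controlled in the proof of \Thm{clc_in_lz}—this degree-increasing path never drops below $w_{\min}$ and hence remains in $G[\lz]$ at every step.

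Finally I would assemble the pieces. On the high-probability event that \Thm{chung-lu}, \Clm{exp-deg}, and the diameter bound for $G[\lz]$ all hold, the chain of inequalities above applies simultaneously to \emph{every} pair $(s,t)$: collision pairs incur zero error, while routing pairs incur error at most $d_{\lz}(C(s),C(t))+O(1)=O(\log\log n)$. Since the structural events hold with high probability and the per-pair bound is then deterministic, this establishes the ``for all pairs'' guarantee. The remaining subtlety is the $O(1)$ slack coming from the precise stopping condition of \Alg{wormhole} (whether the entry sets lie at distance $\delta(\cdot,\lz)$ or $\delta(\cdot,\lz)\pm 1$), but this affects only constants and is absorbed into the final $O(\log\log n)$ bound.
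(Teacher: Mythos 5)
Your proposal is correct in substance, but it does not follow the paper's proof --- it is in fact more careful, and it supplies a step that the paper's own (much shorter) argument glosses over. The paper's proof splits into two cases: if $d(s,t)\le d(s,\clc)+d(\clc,t)$, it asserts the two trees meet (zero error); otherwise, for pairs with a shortest path through $\clc$ of length $d(s,u)+d(u,v)+d(v,t)$ with $u,v\in\clc$, it asserts that \ouralg{}'s output has length at most $d(s,\clc)+\mathrm{diam}(\clc)+d(\clc,t)$ and concludes from $d(s,\clc)\le d(s,u)$, $d(\clc,t)\le d(v,t)$ and $\mathrm{diam}(\clc)=O(\log\log n)$ (\Thm{chung-lu}). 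That asserted output bound is exactly where your ``delicate point'' lives: since \ouralg{} halts its trees upon reaching $\lz$ (not $\clc$) and then restricts its search to $G[\lz]$, realizing a path of that length requires the segments from the entry sets $C(s),C(t)$ to $\clc$ to exist \emph{inside} $G[\lz]$ --- which is precisely the weight-increasing-path lemma you supply and the paper never states. Your route (non-collision forces $d(s,t)\ge\delta(s,\lz)+\delta(t,\lz)-O(1)$; the error reduces to $d_{G[\lz]}(C(s),C(t))+O(1)$; then $\mathrm{diam}(G[\lz])=O(\log\log n)$) is also exhaustive over all pairs by construction, whereas the paper's two cases, as literally stated, are not. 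What the paper's version buys is brevity and the appearance of using nothing beyond the stated hypothesis $\clc\subseteq\lz$ plus \Thm{chung-lu}; what yours buys is an argument that actually closes, at the price of needing a property of $\lz$ strictly stronger than $\clc\subseteq\lz$: that $\lz$ contains every vertex down to a degree threshold $n^{\Theta(\gamma)}$, so that the degree-increasing path never exits $\lz$ (for an arbitrary superset of $\clc$ this step fails, and $G[\lz]$ could even disconnect $C(s)$ from $C(t)$). You flag this dependence on the proof of \Thm{clc_in_lz} honestly, and it is the right source, but a complete write-up would have to re-run that argument at the lower threshold and union-bound the neighbor-weight concentration over all of $\lz$; both are routine for weights $n^{\Theta(\gamma)}\gg\log n$ via \Clm{exp-deg}, and in fact only $O(1)$ weight-increasing hops are needed, since the entry points already have weight $n^{\Theta(\gamma)}$ and each hop raises the exponent by a factor $1/(\beta-2)>1$.
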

	
	The above result holds with high probability even \emph{in the worst case}. 
	Namely, for \emph{all} pairs $(s,t)$ of vertices in the graph, the length of the path returned by \ouralg{} is at most $O(\log \log n)$ higher than the actual distance between $s$ and $t$. This trivially implies that the average additive error of \ouralg{} is, with high probability, bounded by the same amount.
	
	\begin{proof}
		Let $d(s,\clc)$ and $d(\clc,t)$ denote the distances of $s$ and $t$ to the core. If $d(s,t) \leq d(s,\clc) + d(\clc,t)$, then the BFS trees from both sides will intersect without the need to go through the core $\clc$.

		For pairs $(s,t)$ where there is a shortest path going through $\clc$, the length of this shortest path is 
		$$
		d(s,u) + d(u,v) + d(v,t)
		$$
		where $u$ and $v$ are in the core, whereas the path that \ouralg{} outputs is of length at most
		$$
		d(s,\clc) + \text{diam}(\clc) + d(\clc, t).
		$$
		By definition, $d(s,\clc) \leq d(s,u)$ and $d(\clc,t) \leq d(v,t)$. The proof follows since the diameter of $\clc$ is bounded by 
		$O(\log\log n)$
		with high probability (see Theorem~\ref{thm:chung-lu}).
	\end{proof}
	
	\subsection{Query Complexity}
	Recall the node query model in this paper (see \Sec{setting}): starting from a single node, we are allowed to iteratively make  queries, where each query retrieves the neighbor list of a node $v$ of our choice. We are interested in the query complexity, i.e., the number of queries required to conduct certain operations.
	
	Using the inner ring as our index, and routing the shortest paths through it, we get an algorithm with small additive error and sublinear setup query cost: 
	we prove that our query cost per inquiry is also subpolynomial (i.e., of the form $n^{o(1)}$, where the $o(1)$ term tends to zero as $n \to \infty$). Moreover, we prove that  preprocessing is necessary to achieve such query complexity per inquiry; anything else \emph{must} incur a cost of $n^{\Omega(1)}$.
	
	The first result is the upper bound on our performance. 
	
	\begin{theorem}[Subpolynomial query complexity for shortest paths]
		Suppose that the preprocessing phase of our algorithm acquires $\lz$.
		The average query complexity of computing a path between a pair of vertices is bounded by $n^{\Theta(1 / \log \log n)}$.
	\end{theorem}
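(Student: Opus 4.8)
The plan is to first argue that the per-inquiry query cost is entirely incurred by the two outward BFS traversals from $s$ and $t$ through the periphery, and \emph{not} by the subsequent \bibfs{} inside the inner ring. Indeed, during preprocessing \coregen{} queries every vertex inserted into $\lz$, so the induced subgraph $G[\lz]$ (together with all neighbor lists of inner-ring vertices) is already stored in the index. Consequently, once the two search trees reach $\lo=\Gamma(\lz)\setminus\lz$, the restricted \bibfs{} in $G[\lz]$ uses \emph{no new queries}. It therefore suffices to bound the expected number of vertices queried by a single outward BFS, run from a uniformly random start vertex, until its frontier becomes adjacent to $\lz$; the full per-inquiry cost is at most twice this quantity.

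Next I would control this periphery exploration using two structural facts about $\mathcal{G}_{CL}$. First, by \Thm{clc_in_lz} every vertex of degree at least $n^{\gamma}$ (with $\gamma=1/\log\log n$) lies in $\lz$; hence every vertex the BFS queries before reaching $\lo$ has degree at most $n^{\gamma}$. Second, in the regime $2<\beta<3$ the core carries a constant fraction of the total volume, i.e.\ $\mathrm{vol}(\clc)=\Theta(W)$. The second fact lets me bound how deep the BFS can go: since the edges from the periphery to $\clc\subseteq\lz$ are independent of the within-periphery edges (I would make this rigorous by revealing edges in the order used by the process $P$ in the proof of \Thm{clc_in_lz}), each explored vertex $u$ is adjacent to $\clc$ independently with probability $1-\exp(-\Theta(w_u))$. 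Thus the probability that \emph{none} of the explored vertices touches the core is $\exp(-\Theta(V))$, where $V$ is the total volume explored so far. In particular, once the explored volume exceeds $c\log n$, the frontier is adjacent to $\lz$ with high probability and the BFS halts.

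It remains to bound the volume at the moment the BFS halts. Here I would use the restricted second-moment quantity
\[
\tilde d_{\mathrm{per}}=\frac{1}{W}\sum_{v:\,w_v<n^{\gamma}} w_v^{2}\le \frac{n^{\gamma}}{W}\sum_{v} w_v = n^{\gamma},
\]
which upper-bounds the expected factor by which the frontier volume can grow in a single BFS level (since $\EX[\mathrm{vol}(\Gamma(S))]\le \tilde d_{\mathrm{per}}\cdot\mathrm{vol}(S)$ for a periphery set $S$). Conditioned on the BFS not yet having halted, the cumulative volume before the final level is $O(\log n)$ with high probability (otherwise a core connection would have been detected earlier), and the final level can inflate it by a factor of at most $n^{\gamma}$; hence the total explored volume---and therefore the number of queried vertices, which is at most the volume---is $O(n^{\gamma}\log n)=n^{\Theta(1/\log\log n)}$, using $(\log\log n)^2=o(\log n)$ so that $\log n=n^{o(\gamma)}$. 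Averaging over pairs $(s,t)$ is then immediate: by \Thm{chung-lu} almost all vertices lie within distance $O(\log\log n)$ of the core and belong to a component meeting $\clc$, for which the bound applies; the negligible fraction of vertices sitting in peripheral fragments that never reach the core lie in components of size $O(\log n)$ (by \Clm{exp-deg} their degrees are $O(\log n)$), so exploring such a fragment in full costs only $O(\log n)$ queries and does not affect the average.

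The main obstacle is the probabilistic bookkeeping rather than any single inequality: one must expose the within-periphery edges and the periphery-to-core edges in a carefully chosen order so that the ``connection to the core'' coin flips remain fresh and independent, all while conditioning on the globally determined identity of $\lz$. Reusing the edge-revelation process $P$ from \Thm{clc_in_lz}, and working with the degree-based surrogate $\clc$ (which is contained in $\lz$, so that a core connection certifies an inner-ring connection), should let me decouple these dependencies. A secondary technical point is upgrading the expected volume-growth bound to a high-probability statement, or alternatively settling for a Markov-type bound on the expectation, which is all the theorem requires.
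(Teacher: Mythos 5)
Your high-level route is genuinely different from the paper's, and most of its steps are sound: the paper bounds the per-expansion probability that the BFS hits $\lz$ by
\[
p \;=\; \frac{\mathrm{vol}(\lz)}{2m}\;\ge\; \frac{|\lz|}{2n} \;=\; n^{-\Theta(1/\log\log n)},
\]
takes $O(1/p)$ expansions, and multiplies by the maximum periphery degree $n^{\gamma}$ to account for all encountered vertices; you instead track explored \emph{volume}, argue that the no-hit probability decays exponentially in it, and bound per-level volume growth by the restricted second moment $\tilde d_{\mathrm{per}} \le n^{\gamma}$. Your opening reduction (the restricted \bibfs{} inside $G[\lz]$ costs no new queries), the degree cap $n^{\gamma}$ outside $\lz$ via \Thm{clc_in_lz}, and your explicit treatment of peripheral fragments that never reach the core (a case the paper's sketch silently ignores) are all correct and, in the last case, more careful than the paper.

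However, your second structural fact is false, and a step built on it would fail as written. For $2<\beta<3$, the core $\clc=\{v: w_v\ge n^{\gamma}\}$ does \emph{not} carry a constant fraction of the total volume: the volume above a degree threshold $k$ scales as $n\,k^{2-\beta}/(\beta-2)$, so
\[
\frac{\mathrm{vol}(\clc)}{W} \;=\; \Theta\!\left(n^{-\gamma(\beta-2)}\right) \;=\; n^{-\Theta(1/\log\log n)} \;\longrightarrow\; 0 .
\]
Consequently an explored vertex $u$ is adjacent to $\clc$ with probability $1-\exp\!\left(-\Theta\!\left(w_u\, n^{-\Theta(\gamma)}\right)\right)$, not $1-\exp(-\Theta(w_u))$; the no-hit probability after exploring volume $V$ is $\exp\!\left(-\Theta\!\left(V\, n^{-\Theta(\gamma)}\right)\right)$; and the halting threshold is explored volume $n^{\Theta(\gamma)}\log n$, not $c\log n$. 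So the claim ``once the explored volume exceeds $c\log n$ the frontier is adjacent to $\lz$ w.h.p.'' is wrong. The saving grace is that your final bound is robust to the repair: with the corrected threshold, the volume at halting is at most $n^{\Theta(\gamma)}\log n\cdot n^{\gamma} = n^{\Theta(1/\log\log n)}$, which is exactly the claimed bound — the $\Theta(1/\log\log n)$ exponent absorbs the extra $n^{\Theta(\gamma)}$ factor. In effect, the paper sidesteps this pitfall by never asserting the hitting probability is constant, only lower-bounding it by $|\lz|/2n$; your argument becomes correct once fact (b) is replaced by the computation above, with all downstream thresholds adjusted accordingly.
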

	\begin{proofsketch}
		For a given inquiry $\SP(u,v)$, we give an upper bound on the query complexity of the BFS that starts at $u$, and similarly for $v$; the total query complexity is the sum of these two quantities. 
		
		Let $C$ denote the subset of vertices obtained by the algorithm during preprocessing. If $u \in \lz$, then the algorithm only performs a \bibfs{} restricted on $G[\lz]$ to another vertex $v\in \lz$.
		Since $\lz$ is queried during the preprocessing phase, this step requires no additional queries.
		The non-trivial case is when the source $u$ is not in $\lz$.
		In this case, the algorithm performs a BFS until it either collides with the simultaneous  BFS that is taking place from the other vertex in the inquiry, or until it reaches     $\lz$, at which point it performs  a \bibfs{} inside $\lz$. 
		
		Fix some vertex $u \notin \lz$. 
		By the definition of the Chung-Lu model, when performing a walk in the graph from some vertex $v$, for \emph{every  step in the walk}, the probability it reaches $\lz$ is at least
		\begin{align}
			p& = \sum_{w\in \lz}  \frac{d(w)}{2m} =\frac{Vol(\lz)}{2m}
			\geq \left(\frac{|\lz|\cdot d}{2n\cdot d}\right) = \frac{1}{n^{\Theta(1/\log\log n)}}.
		\end{align}
		Therefore,  after expanding the BFS tree from $u$ for $O(1/p)$ many times, we reach $\lz$ with high constant probability. Hence, the BFS from $u$ requires $\Theta(1/p)=n^{\Theta(1/\log\log n})$ many queries until reaching $\lz$. This concludes the proof in the case that all shortest paths between $u$ and $v$ contain at least one edge in $\lz$.
		
		It remains to consider the case in our algorithm where the searches from $u$ and $v$ collide outside $\lz$, or reach the same vertex in $\lz$. This requires us, at most, to query all vertices encountered during the BFS until it reaches $\lz$ for the first time. By Theorem~\ref{thm:clc_in_lz}, the degree of all vertices outside $\lz$ is bounded by $O(n^{1/\log \log n})$. Therefore the overall query complexity of the walk outside the core is $ n^{1/\log \log n}\cdot  n^{\Theta(1/\log\log n)}=n^{\Theta(1/\log\log n)}$.
	\end{proofsketch}
	
	Finally, this brings us to the lower bound. We prove that any method for finding a path between two nodes $u$ and $v$ in a Chung-Lu random graph that does not employ preprocessing, requires $n^{\Omega(1)}$ node queries to succeed with good probability.
	Due to space considerations we only provide here a proof sketch; for a more complete proof of similar results for Erd\H{o}s-Renyi and other random graphs, see Alon et al.~\cite{bibfs-expander}.
	
	\begin{theorem}[polynomial query complexity without preprocessing]
		Fix $2 < \beta < 3$. Let $G$ be a graph generated by the Chung-Lu process with power law parameter $\beta$ and average expected degree $d$ (possibly depending on $n$, but satisfying $d = n^{o(1)}$). Let $u \neq v$ be a random pair of nodes in $G$. Any algorithm that receives node query access to $G$ starting at $u$ and $v$ must make, with high probability, $n^{\Omega(1)}$ queries to $G$ in order to find a path between $u$ and $v$. 
	\end{theorem}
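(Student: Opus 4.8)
The plan is to establish the bound through a deferred-decision (principle of deferred randomness) coupling of the Chung--Lu generation process with the algorithm's exploration. Since each edge $(i,j)$ is present independently with probability $w_iw_j/W$, I would defer its coin flip until the first time the algorithm queries one of its endpoints; this keeps the adversary's answers distributed exactly as $\mathcal{G}_{CL}$ while letting me track precisely what the algorithm has learned. By \Clm{exp-deg} and the power-law tail, a uniformly random pair $(u,v)$ has $w(u),w(v)=n^{o(1)}$ and lies in distinct peripheral pieces with high probability, so the algorithm begins with two small, disjoint explored regions around $u$ and $v$; denote by $Q_u,Q_v$ the vertices it has queried on each side. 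The key reduction is that a valid $u$--$v$ path can be returned only once these regions are connected through \emph{revealed} edges, and a revealed edge must be incident to a queried vertex. Hence any verified connecting path decomposes into segments of length at most two between queried vertices, so the first connection is witnessed either by a revealed edge between $Q_u$ and $Q_v$, or by a common neighbor of $Q_u$ and $Q_v$.

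First I would bound the probability of this connection event by a first/second-moment computation. Writing $\mathrm{vol}(Q)=\sum_{x\in Q}w(x)$, the expected number of direct cross edges is $O(\mathrm{vol}(Q_u)\,\mathrm{vol}(Q_v)/W)$, while the expected number of common neighbors is $O\!\big(\mathrm{vol}(Q_u)\,\mathrm{vol}(Q_v)\,\textstyle\sum_y w(y)^2/W^2\big)$. Using the power-law tail with maximum weight $n^{1/(\beta-1)}$ gives $\sum_y w(y)^2=n^{2/(\beta-1)+o(1)}$ and $W=n^{1+o(1)}$, so the common-neighbor term dominates and the two regions remain disconnected with high probability as long as $\mathrm{vol}(Q_u)\,\mathrm{vol}(Q_v)\le n^{2(\beta-2)/(\beta-1)-\Omega(1)}$. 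Thus connecting $u$ to $v$ forces the accumulated queried volume to be polynomially large on at least one side.

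The crux is to turn this volume requirement into a query lower bound, i.e.\ to upper-bound the volume any adaptive algorithm can accumulate in $q$ queries. Here I would exploit the power-law ball-growth in reverse: a vertex of weight $w$ can be discovered (appear in some revealed neighbor list) only if the current explored volume $V$ satisfies $V\cdot\mathrm{vol}(\{y:w(y)\ge w\})/W=\Omega(1)$, that is $V\gtrsim w^{\beta-2}$, because $\mathrm{vol}(\{y:w(y)\ge w\})=n^{1+o(1)}\,w^{2-\beta}$. Consequently the algorithm can never query a vertex of weight exceeding $\mathrm{vol}(Q)^{1/(\beta-2)+o(1)}$, and since each query increases $\mathrm{vol}(Q)$ by the queried vertex's weight, a short induction yields $\mathrm{vol}(Q)\le q^{1/(\beta-2)+o(1)}$ after $q$ queries. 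Combining with the connection threshold, $q_u^{1/(\beta-2)}q_v^{1/(\beta-2)}\gtrsim n^{2(\beta-2)/(\beta-1)}$ forces $q=q_u+q_v\ge n^{(\beta-2)^2/(\beta-1)-o(1)}=n^{\Omega(1)}$; this matches, up to the $o(1)$, the ``greedy hub-chasing'' strategy that reaches weight-$q^{1/(\beta-2)}$ hubs on both sides and then connects through their shared top-degree neighbor.

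I expect the main obstacle to be making the volume-growth bound $\mathrm{vol}(Q)\le q^{1/(\beta-2)+o(1)}$ rigorous against an \emph{adaptive, adversarial} querying strategy rather than for a fixed BFS. The difficulty is that a single lucky query of a hub inflates the volume enormously, so I must rule out, with high probability and simultaneously over all exploration trajectories, that the frontier reaches an unexpectedly high-degree vertex too early. I would handle this with a supermartingale / trajectory union-bound that tracks the maximum discovered weight and uses $V\gtrsim w^{\beta-2}$ as a one-step drift condition. A secondary technical point is upgrading the first-moment disconnection estimate to hold uniformly throughout the adaptive process (for instance via optional stopping on the exposed cross-edge count), together with verifying the base-case facts about the random endpoints $u,v$ from \Clm{exp-deg}.
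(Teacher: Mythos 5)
Your overall frame---deferred edge decisions, a connection criterion (a revealed cross edge between $Q_u$ and $Q_v$ or a common neighbor), and then a bound on how much volume an adaptive explorer can accumulate---has the same skeleton as the paper's proof, and your first-moment connection estimates are fine. The gap is in your crux step, and it is not merely technical: the volume bound $\mathrm{vol}(Q)\le q^{1/(\beta-2)+o(1)}$ does not follow from your one-step condition, and no argument from that condition alone can prove it. Your condition says a weight-$w$ vertex can be discovered, hence queried, only once $\mathrm{vol}(Q)\gtrsim w^{\beta-2}$; equivalently, at volume $V$ the largest queryable weight is $V^{1/(\beta-2)}$. Plugging this into ``each query adds the queried vertex's weight'' gives the recursion $V_{t+1}\le V_t+V_t^{1/(\beta-2)}$, and since $1/(\beta-2)>1$ for $\beta\in(2,3)$ this recursion is dominated by its second term: starting from any constant $V_0\ge 2$ it grows roughly like $V_0^{(1/(\beta-2))^{t}}$, i.e.\ \emph{doubly exponentially}, reaching $n^{\Omega(1)}$ after only $O(\log\log n)$ steps. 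So the ``short induction'' yields a vacuous bound, not $q^{1/(\beta-2)+o(1)}$, and the supermartingale you propose would have to fight against a drift that genuinely points the wrong way.

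The failure exposes the missing idea: the theorem is information-theoretic, while your drift condition is purely structural and holds regardless of what the algorithm knows. If the algorithm could see degrees of discovered-but-unqueried vertices (a degree oracle), the statement would simply be false: greedy hub-chasing---always query the highest-degree discovered vertex---realizes exactly the doubly-exponential recursion above and connects $u$ and $v$ through the high-degree core in $n^{o(1)}$ queries (this is essentially why \ouralg{} achieves $n^{o(1)}$ per-inquiry cost \emph{after} preprocessing). Hence any correct proof must use the fact that in the node-query model discovered vertices are essentially indistinguishable, so each \emph{queried} vertex behaves like a size-biased sample. That is what the paper's sketch does: the probability that any single query lands on a vertex of expected degree at least $n^{\alpha}$ is at most $n^{-\gamma}$ for some $\gamma=\gamma(\alpha,\beta,d)\in(0,\alpha)$; union bounding over $q=n^{\gamma/2}$ queries shows that w.h.p.\ no queried vertex has weight $\ge n^{\alpha}$, so the explored edge sets satisfy $|A_u|,|A_v|\le n^{\gamma/2+\alpha}<n^{c/3}$, and then the connection probability $|A_u|\,|A_v|\,n^{-c}$ (with $n^{1-c}$ the maximum expected degree) is $o(1)$. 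Your outline can be repaired only by replacing the structural drift condition with this per-query size-biased estimate (which itself needs an exchangeability/deferred-decision argument against adaptive strategies, e.g.\ that in the relevant regime all multiplicities in revealed neighbor lists are $1$, so the algorithm cannot bias its queries toward hubs). Two smaller points: your cutoff $n^{1/(\beta-1)}$ violates the model's assumption $w_iw_j\le W$ (which caps weights at $\sqrt{W}$), so your exponents would need adjusting; the paper sidesteps this by using only the cruder cap $n^{1-c}$.
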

	\begin{proofsketch}
		Consider a Chung-Lu graph with parameters $\beta,d$, and suppose that $n$ is large enough. The highest expected degree of a vertex in the graph is bounded by $n^{1 - c}$ for a constant $c$ that may depend on $\beta$ and $d$ (but not on $n$). 
		
		For a given time step of the algorithm, let $A_u$ denote the component consisting of all nodes queried by virtue of being reached from $u$ in the algorithm, and all edges in $G$ intersecting at least one such node. Define $A_v$ similarly for $v$. 
		Consider one step of the algorithm where some vertex $w$ is added to $A_u$, together with all new (previously unseen) edges incident to $w$. Let $e$ be any such new edge. The probability that $e$ intersects $A_v$ (and thereby closes the desired path between $u$ and $v$) is bounded by
		$$
		\sum_{x \in A_v}\frac{w(x)}{2W} \leq |A_v| \cdot \frac{n^{1-c}}{n} = \frac{|A_v|}{n^c},
		$$
		where $|A_v|$ is the number of \emph{edges} in the component $A_v$. Union bounding over all edges $e$ added in this process, we get that the probability of finding a path is bounded by $|A_u| \cdot |A_v| \cdot n^{-c}$. It follows that to find a path with good probability, at least one of $A_u$ and $A_v$ needs to have at least $n^{c/3}$ edges. The rest of the proof sketch is devoted to proving this last claim.
		
		We use the following fact on Chung-Lu graphs with the relevant parameters. Fix $\alpha > 0$. The probability that a new edge (emanating from a newly queried vertex $w$) will hit, at its other endpoint, a vertex of expected degree at least $n^{\alpha}$ is at most $n^{-\gamma}$, where $\gamma$ is a constant that depends only on  $\beta, d, \alpha$, and $\gamma\in(0,\alpha)$. 
		
		Pick $\alpha = c/6$ and the corresponding $\gamma < \alpha$, and suppose we make an arbitrary traversal involving $n^{\gamma/2}$ queries starting at $u$. The probability that any specific query hits a node of degree at least $n^\alpha$ is bounded by $n^{-\gamma}$. Union bounding over all $n^{\gamma/2}$ queries, we conclude that the event that an $n^{\alpha}$-degree node is queried throughout the whole traversal is bounded by $n^{\gamma/2} \cdot n^{-\gamma} = n^{-\gamma/2}$. Conditioning on this event not happening, the size of $A_u$ at the end of the traversal is bounded by $n^{\gamma / 2} \cdot n^{\alpha} < n^{c/3}$, as desired.

		The reasoning in the last paragraph implicitly assumes that outgoing edges from $A_u$ will never intersect $A_u$ itself. This is true with high probability by essentially the same argument, since the probability of each edge $e$ to intersect $A_u$ is bounded by $|A_u| / n^c < n^{-2c/3}$.
	\end{proofsketch}

	\section{Experimental Results}
	\begin{table}[b]
		\centering
		\begin{tabular}{|c||c|c|}
			\hline
			Class & $|\lz|$ & Networks \\
			\hline 
			small & 6\% & \smaller dblp, epinions, slashdot, skitter\\
			med & 4\% & \smaller large-dblp,  pokec, livejournal, orkut \\
			large & 1\% & \smaller wikipedia, soc-twitter\\
			\hline
		\end{tabular}
		\caption{Classification of networks used in experiments by size of $\lz$ used in experiments.}
\label{tab:classes}
\vspace{-.7cm}
\end{table}
\begin{table}[!htb]
\centering
\begin{tabular}{|c||c|c|c|c|c|c|}
	\hline
	Network & $|V|$ & $|E|$ & \small \bibfs{} & \small\PLL{} & \small\MLL{} \\
	\hline 
	epinions & $7.6\cdot 10^4$ & $5.1\cdot 10^5$ &\cmark & \cmark& \cmark\\
	slashdot & $7.9\cdot 10^4$ & $5.2\cdot10^5$ &\cmark & \cmark& \cmark\\
	dblp & $3.2\cdot 10^5$ & $1.0\cdot 10^6$ & \cmark & \cmark& \cmark\\
	skitter & $1.7\cdot 10^6$ & $1.1\cdot 10^7$ &\cmark & \cmark& \cmark\\ \hline
	large-dblp & $1.8\cdot10^6$ &$2.9\cdot10^7$ &\cmark & \cmark &\xmark\\
	soc-pokec & $1.6\cdot 10^6$ & $3.1\cdot 10^7$ &\cmark & \xmark &\xmark\\
	soc-live & $4.8\cdot 10^6$ & $6.8\cdot10^7$ &\cmark & \xmark&\xmark\\
	soc-orkut & $3.1\cdot10^6$ & $1.2\cdot10^8$ &\cmark & \xmark&\xmark\\
	\hline
	wikipedia & $1.4\cdot10^7$ & $4.4\cdot10^8$ &\cmark & \xmark&\xmark\\
	soc-twitter & $4.2\cdot 10^7$ & $1.5\cdot10^9$ &\cmark & \xmark&\xmark\\
	
	\hline
\end{tabular}
\caption[caption]{Network datasets used for experimental evaluation with their corresponding sizes. We observe that \bibfs{} finishes on all the datasets, but the indexing based methods do not on the medium and large networks. 
	
	\small{We were able to set up \MLL{} on large-dblp in reasonable time, but the subsequent shortest path inquiries were met with consistent segmentation faults that we were unable to debug.}} 
\label{tab:data}
\vspace{-.7cm}
\end{table}

In this section, we experimentally evaluate the performance of our algorithm. We look at several metrics to evaluate performance in different aspects. We compare with \bibfs{}, a traversal-based approach, and with the indexing algorithms \PLL{} and \MLL. We test several aspects, summarized next. Detailed results are provided in the rest of this section.
\begin{compactenum}
\item \textbf{Query cost: } By query cost, we refer to the number of vertices queried by \ouralg, consistent with our access model (see \Sec{setting}). We show that \ouralg{}  actually does remarkably well in terms of query cost, seeing a small fraction of the whole graph even for several thousands of shortest path inquiries.    See Figures~\ref{fig:summary2}(b) and~\ref{fig:qcost-small}.
\item \textbf{Inquiry time: } We demonstrate that \ouralg{}$_E$  achieves consistent speedups over traditional  \bibfs{}, even while using it as the sole primitive in the procedure. More complex methods such as \PLL{} and \MLL{} time out for the majority of large graphs. We also provide variants that achieve substantially higher speedups. Finally, in \Sec{prim}, we show how using the existing indexing-based state or the art methods on the core lets us achieve indexing-level inquiry times. See Figure~\ref{fig:exp-summary}.
\item \textbf{Accuracy: } We show that our estimated shortest paths are accurate up to an additive error 2 on 99\% of the inquiries for the default version  \ouralg{}$_E$; a faster heuristic, \ouralg{}$_H$, shows lower accuracy, but still over 90\% of inquiries satisfy this condition. See \Sec{vsbibfs} and ~\Tab{wormhole-summary} for details.
\item \textbf{Setup: } We look at the setup time and disk space with each associated method.  Perhaps as expected, \ouralg{}$_E$ beats the indexing based algorithms by a wide margin in terms of both space and time: see ~\Fig{setuptime}. In \Sec{prim} we further show that using these methods restricted to $\lz$ results in a variant \ouralg{}$_M$ with much lower setup cost (~\Tab{mlloncore}).
\end{compactenum}
\begin{table*}[!htb]
\centering
\begin{tabular}{|c||c||c|c|c|c|c||c|c|c|c|c|}
	\hline
	&\bibfs{} & \multicolumn{5}{c||}{\ouralg{}$_E$}& \multicolumn{5}{c|}{\ouralg{}$_H$} \\  \hline
	Network & MIT  & MIT & SU/I & +0(\%) & $\leq +1$  (\%) & $\leq +2$  (\%) & MIT & SU/I & +0(\%) & $\leq +1$  (\%) & $\leq +2$  (\%) \\
	\hline 
	epinions  & 144  &  41 & \cellcolor{blue!8} 4.5 &  \cellcolor{green!25}98.06   &  \cellcolor{green!25}99.99   &   \cellcolor{green!81}100.00  & 20 &\cellcolor{blue!15} 24 & 66.97   &  \cellcolor{green!25}99.54   &   \cellcolor{green!81}100.00    \\
	slashdot &  99 &  46 & \cellcolor{blue!8}2.8 &  73.43   &  \cellcolor{green!25}95.37   &  \cellcolor{green!25}99.28    &  24 &  \cellcolor{blue!8} 14  &63.09   &  \cellcolor{green!25}98.78   &  \cellcolor{green!25}99.98     \\
	dblp & 247  &  110 & \cellcolor{blue!8} 2.4 &  \cellcolor{green!25}97.02   &  \cellcolor{green!25}99.96   &   \cellcolor{green!81}100.00    & 48  &  \cellcolor{blue!8} 11  & 44.72   & 82.42   &  \cellcolor{green!25}96.53     \\
	skitter &  3004  & 1439 & \cellcolor{blue!8}2.3 &  \cellcolor{green!9}94.71   &  \cellcolor{green!25}99.89   &   \cellcolor{green!81}100.00    &660 & \cellcolor{blue!15} 24 & 58.99   &  \cellcolor{green!25}96.78   &  \cellcolor{green!25}99.98  \\ \hline
	large-dblp & 3041 &  1447 & \cellcolor{blue!8} 2.3 &  85.37   &  \cellcolor{green!25}99.10   &  \cellcolor{green!25}99.95    &417 &  \cellcolor{blue!15}21& 47.61   &  89.74   &  \cellcolor{green!25}99.04  \\
	pokec & 2142 & 1317 & \cellcolor{blue!8} 1.8 &  51.37   & \cellcolor{green!9} 92.15   &  \cellcolor{green!25}99.63    &506 &  \cellcolor{blue!8}11 & 14.52   &  59.51   &  \cellcolor{green!9}90.71  \\
	livejournal & 8565 &  4318 & \cellcolor{blue!8} 2.1 &  71.98   &  \cellcolor{green!25}97.95   &  \cellcolor{green!25}99.86    & 1054 & \cellcolor{blue!15} 29 & 28.86   &  77.93   &  \cellcolor{green!25}97.83  \\
	orkut &  14k & 3213 & \cellcolor{blue!8}4.4  &  58.50   & \cellcolor{green!9} 94.56   &  \cellcolor{green!25}99.64    & 1030  &  \cellcolor{blue!25} 35 & 20.66   &  68.11   & \cellcolor{green!9} 93.93  \\ \hline
	wikipedia & 35k & 17k & \cellcolor{blue!8} 2.4 & \cellcolor{green!9} 94.94   &  \cellcolor{green!25}99.92   & \cellcolor{green!81}  100.00   & 3394  &  \cellcolor{blue!25} 36 & 44.65   &  \cellcolor{green!25}98.74   &   \cellcolor{green!81}100.00   \\
	soc-twitter &  204k   & 81k & \cellcolor{blue!8}3.4 &\cellcolor{green!9}  93.30   & \cellcolor{green!25} 99.98   &  \cellcolor{green!81} 100.00   & 12k & \cellcolor{blue!35} 181 & 35.13 & \cellcolor{green!25} 99.30   & \cellcolor{green!25} 99.99    \\
	\hline
\end{tabular}

\caption{Summary of \ouralg{} with the two cases: \ouralg{}$_E$, with the exact shortest path through the inner ring, and \ouralg{}$_H$ that picks only the shortest path between the highest degree vertices -- refer to \Sec{vsbibfs}.  We note the mean inquiry times per inquiry (MIT) in microseconds, and average speed up \emph{per inquiry} (SU/I) compared to \bibfs{} for each method. We also note the percentiles of inquiries by absolute error: for \ouralg{}$_E$, we get absolute error under 2 for over 99\% of the inquiries. This drops for \ouralg{}$_H$, but it is still above 99\% for six of the ten datasets, and over 90\% in all of them. Accuracy numbers are highlighted in green, where darker is better. Similarly, we have a gradient of violet for speedups; darker is faster. For \ouralg{}$_E$, speedup over \bibfs{} per inquiry on average is usually between 2$\times$ and 3$\times$, but this increases to consistently between $20-30\times$ in \ouralg{}$_H$, and reaches a max of $181\times$ in our largest dataset, soc-twitter.
	\vspace{-.7cm}}
\label{tab:wormhole-summary}
\end{table*}
\paragraph{Datasets}

The experiments have been carried out on a series of datasets of varying sizes, as detailed in  \Tab{data}. The datasets have been taken either from the SNAP large networks database ~\cite{SNAP} or the KONECT project ~\cite{konect}. We organize the results into two broad sections: we first introduce two variants of our algorithm.  We then compare it with  \bibfs{} as well as  indexing based methods -- \PLL{} and \MLL{}. The latter two did not terminate in 12 hours for most of the graphs, while \bibfs{} completed on even our largest networks. 

We classify the examined graphs into three different classes and use a fixed percentage as the `optimal' inner ring size for graphs of comparable size (where the inner core size as $\%$ of the total size decreases for larger networks, an indication for the sublinearity of our approach). This takes into account the tradeoff between accuracy and the query/memory costs incurred by a larger inner ring. The classification is summarized in \Tab{classes}. For the experimental section, we default to these sizes unless mentioned otherwise. 

\paragraph{Implementation details}
We run our experiments on an AWS ec2 instance with 32 {\tt AMD EPYC\texttrademark{} 7R32} vCPUs and 64GB of RAM. The code is written in C++ and is available in the supplementary material as a zipped folder, with links to the datasets. The backbone of the graph algorithms is a subgraph counting library that uses compressed sparse representations~\cite{Escape}.

\subsection{\ouralg{}$_E$, \ouralg{}$_H$ and \bibfs} \label{sec:vsbibfs}We run two separate versions of \ouralg{}: the one, as described in \Alg{wormhole}, is what we refer to as \ouralg{}$_E$ (exact). Another variant we consider is where we pick just the highest degree vertices from $C(s)$ and $C(t)$ respectively, and do a BFS from those. This cuts down on the inquiry time, but also reduces the accuracy. We call this variant \ouralg{}$_H$. We take a deeper look into this tradeoff in the following sections. Note that both of these methods have the same query cost per shortest path inquiry, and the only difference is in running time (as all vertices in $\lz$ have already been queried during \coregen). For all runs, we do approximately 10,000 inquiries of uniformly chosen source and destination pairs (discarding disconnected pairs and other invalid inquiries). Our program runs on a compressed sparse representation (CSR) graph, and we store binary arrays for both $\lz$ and $\lo$ for practical purposes.

\paragraph{Query Cost}
To examine query cost, we look at the number of vertices seen by the algorithm over the first 5000 inquiries and compare it to  \bibfs. We direct the reader to \Fig{summary2}(b) for a summary of the query cost for our larger graphs, and \Fig{qcost-small} for the same in the smaller ones.  Consistently across all graphs,  \bibfs{} quickly views between 70\% and 100\% of the vertices in just a few hundred inquiries. In comparison, the query cost of \ouralg{}  is quite small for all networks: in the smaller networks, we see less than 30\% of the vertices even after 5000 inquiries, and in the larger ones this number is less than 10\% (in the largest ones, wikipedia and soc-twitter, it is <2\%).
\begin{figure}[h!]
\centering

{\includegraphics[width=0.45\textwidth]{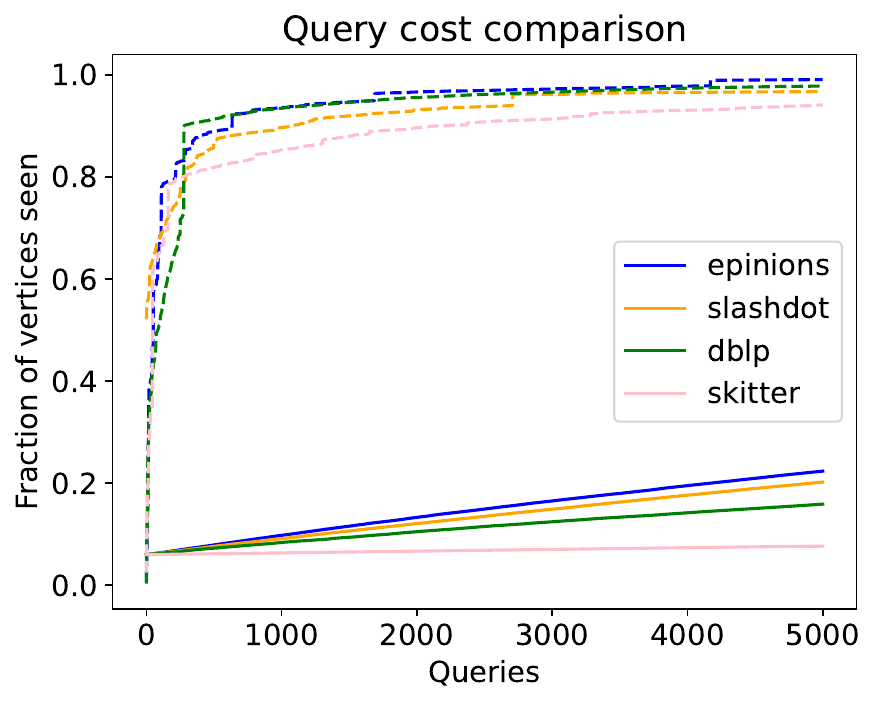}}

\vspace{-0.3cm}	
\caption{Query cost of \ouralg{}  and that of  \bibfs{} different datasets: fraction of the graph seen by \ouralg{}  vs  \bibfs{} over the first 10k inquiries in small and medium graphs. The dotted lines refer to the query cost by  \bibfs{} while the solid lines are due to \ouralg{}. The results for large and huge graphs appear in item (b) of Figure~\ref{fig:exp-summary}.
	\vspace{-0.3cm}
} \label{fig:qcost-small}
\end{figure}

\paragraph{Accuracy}
\label{sec:accuracy}
We consider two error measures, absolute (additive) and relative (multiplicative).
Clearly, an additive error of, say, 2, is less preferable for shorter paths than for longer ones. The \emph{relative error} $re(s,t)$ is more refined, as it measures the error with respect to the actual distance of the pair at question. Formally, 
\begin{align}
re(s,t) = \dfrac{\overline d(s,t) - d(s,t)}{d(s,t)},
\end{align}
where for a vertex pair $(s,t)$, $d(s,t)$ is the true distance and $\overline d(s,t)$ is the approximate distance estimated by \ouralg.
We investigate the relative error as a function of the core size -- see \Fig{errors}. 
In general, the accuracy drops as we decrease the size of the core. Moreover, we observe that larger graphs give comparably good results at much smaller inner ring sizes, keeping in line with our hypothesis of a sublinear inner ring.

The other key accuracy statistic is additive error: this is summarized in \Tab{wormhole-summary}. For $\ouralg{}_E$ across almost all networks, the vast majority of the pair inquiries are  \emph{estimated perfectly}. Our worst performance is on soc-pokec and soc-live. Even there, we have perfect estimates for 60\% of vertices, and over 94\% of vertices have an additive error of less than 1. In \emph{all networks}, more than 99\% of the pairs are estimated with absolute error lesser or equal to 2 in \ouralg{}$_E$. The accuracy is poorer in \ouralg{}$_H$ (recall that in this variant we do not compute all-pairs shortest paths in the core, resorting to an approximate heuristic instead). However, we note that even then, in most graphs, we have an additive error of at most 2 in over 99\% of the queries, and over 90\% for all graphs. 

\paragraph{Speedups over \bibfs{} in inquiry time}
The main utility of \ouralg{}$_H$ is in exhibiting how much faster our algorithm becomes if we sacrifice some  accuracy. This is also documented in \Tab{wormhole-summary}. \ouralg{}$_E$ already achieves speedups per inquiry over \bibfs{}: typically at least 2$\times$, but up to over 4$\times$ in some networks. The variant  \ouralg{}$_H$ further  speeds up each inquiry by another order of magnitude, up to a massive 181$\times$ in case of our largest network, soc-twitter (while utilizing the same decomposition and data structure). We  note that \bibfs{} does not have any setup cost, while \ouralg{} does (for both variants). However, we show that our setup costs are typically very low: our highest setup time is about two minutes, and the highest setup space requirement is under 100MB. The complete statistics are provided in \Tab{setupworm}. (In our implementation, setup time is the time needed for us to capture $\lz$ and $\lo$ in Algorithm \coregen, and setup space is the stored binary files for $\lz$ and $\lo$.)

\begin{figure}[]
\centering

\begin{subfigure}[Mean relative error with varying inner ring sizes across different networks]
	{\includegraphics[width=0.4\textwidth]{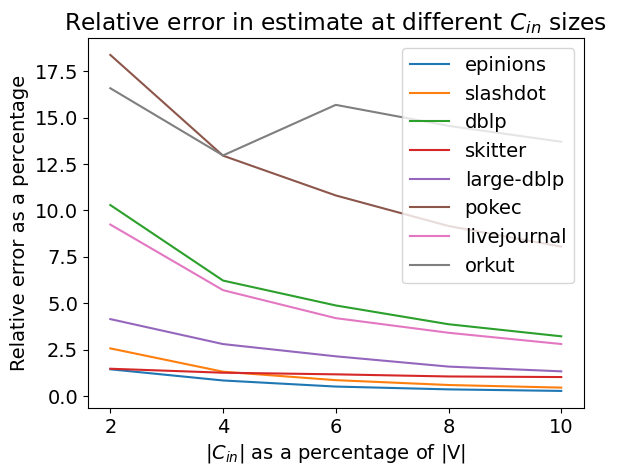}}
\end{subfigure}
\begin{subfigure}[Mean relative error with varying inner ring sizes for soc-twitter]
	{\includegraphics[width=0.4\textwidth]{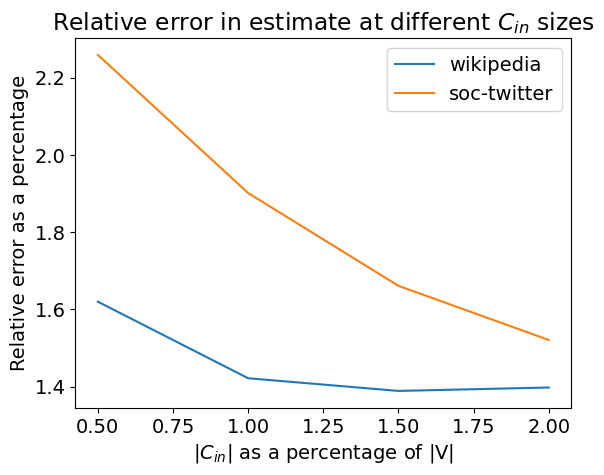}}
\end{subfigure}

\caption{Accuracy of \ouralg{}$_E$  in different settings across different datasets. We plot relative error  at different inner ring sizes for the datasets.} \label{fig:errors}
\vspace{-.3cm}
\end{figure}

\begin{table*}[]
\begin{tabular}{|c||c|c|c|c|c|c|c|c|c|c|}
	\hline
	Metric & \smaller epinions & \smaller slashdot &\smaller dblp &\smaller skitter & \smaller large-dblp & \smaller pokec & \smaller livejournal & \smaller orkut & \smaller wikipedia & \smaller soc-twitter  \\ \hline
	Setup time (s) & 0.02 & 0.02 & 0.03 & 0.42 & 0.40 & 0.46 & 1.68 & 1.49 & 11.80 & 123.57 \\
	$\lz+\lo$(MB) &0.15 & 0.15 &0.62 & 3.31 & 3.56 & 3.19 & 9.47 & 6.00 & 26.56 & 81.35 \\
	\hline
\end{tabular}
\caption{Setup cost for \ouralg{}: this holds for both \ouralg{}$_E$ and \ouralg{}$_H$. Setup time is the time needed  to capture $\lz$ and $\lo$ in \Alg{core}. The last row, space, is the footprint on disk of our binary arrays.
	\vspace{-0.5cm}
}\label{tab:setupworm}
\end{table*}

\begin{table}[]
\centering
\begin{tabular}{|c||c|c||c|c||c|c|}
	\hline
	\multirow{2}{*}{Network} & \multicolumn{2}{c||}{Setup (sec)} & \multicolumn{2}{c||}{Inq. time ($\mu s$)} & \multicolumn{2}{c|}{Breakeven}\\ \cline{2-7}
	& \PLL & \MLL& \PLL & \MLL& \PLL & \MLL  \\ \hline
	epinions &  4.1 &1.5 & 0.96 &2.66 & 101k & 39k  \\
	slashdot &  6.8 &3.6 & 1.08 & 3.98 & 151k & 85k \\
	dblp & 218 
	&52.4 & 4.05 & 11.99 & 2.1M & 535k \\
	skitter &  
	1.1k 
	& 466 
	& 2.72& 11.06& 769k & 326k \\
	large-dblp &  9.2k 
	& 1.6k 
	& 9.25 & N/A & 6.4M& inf\\ \hline

	\hline
\end{tabular}
\caption{Comparisons with \PLL{} and \MLL{}. We look at setup time, mean inquiry time, and breakeven compared to \ouralg{}$_E$. The indexing based methods do not terminate on graphs larger than this. For large-dblp, setup completes for \MLL{} but we are unable to make inquiries.
	\vspace{-0.5cm}}
\label{tab:LLsetups}
\end{table}

\subsection{Comparison with index-based methods}\label{sec:vsLL}
As discussed, index construction allows for much faster inquiry times, but setup times that can take up to several hours even for relatively small-sized graphs. We attempt to benchmark our algorithm against two state of the art methods, \PLL{} and \MLL. \PLL{} solves the easier task of finding distances, while \MLL{} does explicit shortest path construction (the authors note that \PLL{} may be extended to output paths, but no code is publicly available). However, once the graphs hit a few million vertices, these methods either take too long to run the setup, or even if they succeed in index construction, they may be too large to load into memory. We summarize the results in \Tab{LLsetups}, and expand on the discussion in the following paragraphs.

\paragraph{Mean inquiry time} In the cases where index based methods do succeed (limited to graphs with fewer than 30 million edges), they have a clear advantage in per inquiry cost.  Their typical inquiry time is in the microsecond range, where \PLL{} is faster since it only computes distances, and \MLL{} is about 3 times slower than \PLL{}.

\paragraph{Setup cost} In terms of setup times, both \ouralg{}$_E$ and \ouralg$_H$ have a massive advantage over the index-based methods. We direct the reader to \Tab{setupworm} and \Tab{LLsetups} for a comparison of setup times: we let all methods run for 12 hours, and terminate if they do not finish in that time. Both \PLL{} and \MLL{} failed to complete setup  for any graph with more than 30 million vertices. In comparison, even for our largest graph, soc-twitter, of over a 1.5 billion edges, the setup time for \ouralg{}$_E$ is just minutes. Even in the cases where these methods do terminate, the storage footprint  is massive. We observed that if allowed to run, \MLL{} completes index construction on soc-pokec in a little under 24 hours, but the constructed files are almost a combined 45 gigabytes in size; in comparison, the input COO file is only 250 megabytes! A detailed comparison of the space footprint is given in  \Fig{summary2} in \Sec{intro}, and the time comparisons can be found in \Fig{setuptime}, and in \Tab{LLsetups}.
\begin{figure}[h!]
\centering
\includegraphics[width = 0.4\textwidth]{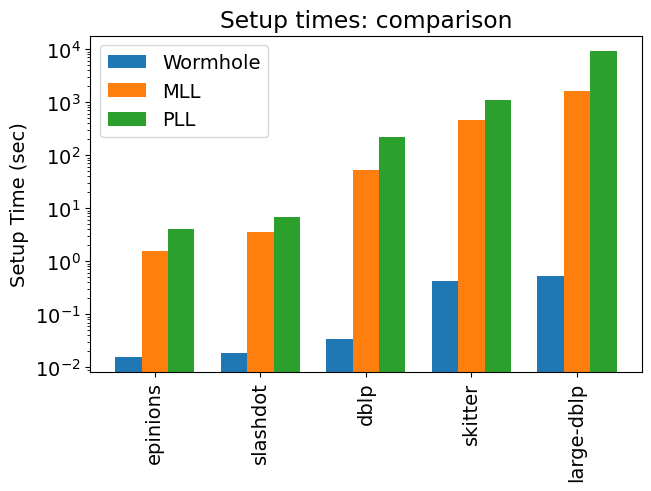}
\caption{A comparison of setup time between different methods. The index-based methods did not terminate on graphs larger than these.
	\vspace{-0.5cm}
} \label{fig:setuptime}
\end{figure}

\paragraph{When is indexing better? A running time comparison:}
Given the very high setup time of index-based methods, \ouralg{}$_E$ has a head start. However, index-based solutions do catch up and outdo \ouralg{} after sufficiently many shortest path inquiries. We quantify this threshold to give the reader a sense of when to favor each approach. 
We refer to the threshold as 
\emph{Breakeven} (\texttt{BE}$_X$ for a competing method $X$) and provide  its values for different graphs in \Tab{LLsetups}. Since the index-based methods have setup times between $10^6$ and $10^9$ orders of magnitude higher than the inquiry times, we look at how many inquiries are needed for them to have an average gain over \ouralg{}$_E$ in the net time taken. Formally, we define breakeven for \PLL (likewise \MLL) as:

$$\texttt{BE}_\PLL = \frac{\textrm{setup}_{\PLL} - \textrm{setup}_{\ouralg_{E}}}{MIT_{\ouralg_{E}}-MIT_\PLL}.\label{eq:BE}$$
We can similarly compute the breakeven with respect to \ouralg$_H$, but we note that it is already quite high even against the much slower version \ouralg$_E$. This implies that even with the low time per inquiry, the setup cost is so prohibitively high that the gains take hundreds of thousands to even millions of inquiries to set in, even on the small networks.

\subsection{\ouralg{}  as a primitive: \ouralg{}$_M$}\label{sec:prim}
\ouralg{}$_E$ and $\ouralg{}_H$ perform well in terms of query cost and accuracy. The inquiry times, especially in the latter variant, are also huge improvements over  \bibfs{}, but lag behind indexing based methods that perform lookups to find shortest paths. However, as evident by our experiments, landmark based index creation is often prohibitively expensive, both in terms of the time taken to create the index and the space required to store it, to the extent that it may even be impossible for large networks. The success of \ouralg{}$_E$ comes from exact shortest paths computed solely on a small core, which is as low as 1\% of the graph. In practice, it takes very little time for a traversal based algorithm to reach $\lz$, and the bulk of the cost comes from the exact path computation inside $\lz$. We thus ask, how much faster can we make our algorithm if we  speed this process up, perhaps by doing indexing solely on the core?

\begin{table}[]
\begin{tabular}{|c||c|c|c|}
	\hline
	\multirow{2}{*}{ Network} & \multicolumn{2}{c|}{Setup for \MLL{} on $\lz$}& \multirow{2}{*}{MIT ($\mu s$)} \\ \cline{2-3}
	& Time (sec)& Space (MB) & \\ \hline
	epinions & 0.41 & 2 & 1.57\\
	slashdot & 0.54 & 5 & 2.45\\
	dblp & 2.99 & 20 & 5.00\\
	skitter & 28.24 & 106 & 12.04\\
	large-dblp & 55.68 & 182 & 15.22\\
	pokec & 144.34 & 328 & 45.52\\
	livejournal & 803.95 & 1303 & 57.89\\
	orkut & 1156.28 & 1476 & 157.67\\
	wikipedia & 551.19 & 452 & 120.65\\
	soc-twitter & 20949.82 & 6215 & 115.44\\
	\hline
\end{tabular}
\caption{\ouralg{}$_M$: running \MLL{} on $\lz$. 
}\label{tab:mlloncore}
\end{table}

To this end, we propose as a third alternative, \ouralg{}$_M$: it functions almost identical to \ouralg{}$_H$, but instead of running \bibfs{} to find the shortest paths in $\lz$, it sets up \MLL{} on all of $\lz$ and then uses the \MLL{} index to compute shortest paths  inside $\lz$. This is an illustrative example to show how our decomposition can be used in combination with existing techniques. The accuracy guarantees of this variant as presented will be identical to \ouralg{}$_H$; however, we do not analyze the index size theoretically and suspect it will not be sublinear. 

We conduct similar experiments for \ouralg{}$_M$. Remarkably,  while \MLL{} fails to complete setup on most of the graphs, \ouralg$_M$ successfully  runs it on the core in \emph{all} cases. Moreover, as noted in \Tab{mlloncore}, the cost in both time and space is orders of magnitude smaller than for the full graph, though still significantly larger than the default \ouralg{}$_H$. We note about two orders of magnitude of improvement in time per inquiry over \ouralg{}$_H$, but at the same time, the setup cost is also about two orders of magnitude higher in both time and space. Notably, even in cases where \MLL{} does complete on the full graph, we are able to answer inquiries in roughly the same time (see \Fig{exp-summary}) at a fraction of the setup cost (\Fig{setuptime}). We leave a more systematic investigation of this approach of combining \ouralg{} with  existing methods on the core to future work.

\bibliographystyle{ACM-Reference-Format}
\bibliography{literature}

\end{document}